\definecolor{Gray}{gray}{0.98}
\newcommand{\cmark}{\ding{51}}%
\newcommand{\xmark}{\ding{55}}%
\newcommand{\rom}[1]{\lowercase\expandafter{\romannumeral #1\relax}}
\tikzset{
  set arrow inside/.code={\pgfqkeys{/tikz/arrow inside}{#1}},
  set arrow inside={end/.initial=>, opt/.initial=},
  /pgf/decoration/Mark/.style={
      mark/.expanded=at position #1 with
      {
          \noexpand\arrow[\pgfkeysvalueof{/tikz/arrow inside/opt}]{\pgfkeysvalueof{/tikz/arrow inside/end}}
      }
  },
  arrow inside/.style 2 args={
      set arrow inside={#1},
      postaction={
          decorate,decoration={
              markings,Mark/.list={#2}
          }
      }
  },
}
\tikzset{
roundnode/.style={circle, draw=green!60, fill=green!5, 
very thick, minimum size=8mm},
font={\fontsize{15pt}{12}\selectfont},
fontscale/.style = {font=\relsize{#1}}
}
\begin{document}

\setcounter{figure}{0}   

\author{Qing Chen}
\affiliation{%
  \institution{University of Zurich}
  \country{Switzerland}
}
\email{qing@ifi.uzh.ch}

\author{Michael H. B{\"o}hlen}
\affiliation{%
  \institution{University of Zurich}
  \country{Switzerland}
}
\email{boehlen@ifi.uzh.ch}

\author{Sven Helmer}
\affiliation{%
  \institution{University of Zurich}
  \country{Switzerland}
}
\email{helmer@ifi.uzh.ch}


\title{An experimental comparison of tree-data structures for
  connectivity queries on fully-dynamic undirected graphs (Extended Version)}

\begin{abstract}
  During the past decades significant efforts have been made to
  propose data structures for answering connectivity queries on fully
  dynamic graphs, i.e., graphs with frequent insertions and deletions
  of edges.  However, a comprehensive understanding of how these data
  structures perform in practice is missing, since not all of them
  have been implemented, let alone evaluated experimentally.  We
  provide reference implementations for the proposed data structures
  and experimentally evaluate them on a wide range of graphs.  Our
  findings show that the current solutions are not ready to be
  deployed in systems as is, as every data structure has critical
  weaknesses when used in practice.  Key limitations that must be
  overcome are the space and time overhead incurred by balanced data
  structures, the degeneration of the runtime of space-efficient data
  structures in worst case scenarios, and the maintenance costs for
  balanced data structures.  We detail our findings in the
  experimental evaluation and provide recommendations for implementing
  robust solutions for answering connectivity queries on dynamic
  graphs.
\end{abstract}

\settopmatter{printfolios=true} 

\keywords{Data structure, Connectivity, dynamic graph, experimental study}

\pagenumbering{gobble}

\maketitle
\pagenumbering{arabic}
\setcounter{page}{1}

The source code, data, and/or other artifacts have been made available at
\url{https://github.com/qingchen3/Imp_bench_dyn}.

\section{Introduction}
\label{sec:intro}

It is common lore that asymptotically faster algorithms are preferable
over asymptotically slower alternatives.  Since many years this
understanding has served our community as a coarse guideline for
designing and selecting data structures and algorithms.  In recent
years, however, a number of solutions have been proposed where more
fine-grained evaluations are needed to gain actionable insights into
the properties of data structures and algorithms~\cite{Rough19,
  augsten2010tasm, xiao19PODS, Ding21SIGMOD, fuchs2022sortledton}.

A problem where insights beyond worst case asymptotic complexity and
amortized costs are missing is the connectivity problem for fully
dynamic graphs.  Connectivity queries over fully dynamic graphs
consider graphs with possibly frequent edge insertions and deletions
and determine if two vertices are connected. While this fundamental
problem has been studied extensively on a theoretical level for
dynamic graphs \cite{Fan22,Patra04, eppstein1993separator,
  henzinger1998lower, patrascu2007planning,
  eppstein1992sparsification} and various data structures
\cite{henzinger1995randomized, wulff2013faster, thorup2000near,
  huang2017fully, dtree} and algorithms \cite{HDT, huang2023fully,
  henzinger1997sampling} have been proposed, we lack an understanding
of how these solutions perform in practice.  Typical
  application areas for fully dynamic graphs are \emph{simulation
    scenarios}.  For instance, power grids are networks with large
  diameters that include microgrids with smaller diameters.  To test
  the robustness of power grids, islanding techniques \cite{Goderya80,
    Mahat08} are being used to detect the intentional or unintentional
  division of a connected power grid into disconnected regions.  This
  is done by simulating the deletion of one or multiple edges and
  running connectivity queries for each possible set of edge
  modifications to determine if the power network remains connected.
  Such simulations crucially depend on fast connectivity algorithms
  since many possible sets of edge deletions must be performed and
  checked by running a large number of connectivity queries.

To solve the dynamic connectivity problem three classes of solutions,
all based on spanning trees, have been proposed (see
Section~\ref{sec:maintaining_ds} for details).  The first class
maintains spanning trees without levels.  D-trees
  \cite{dtree} use a \emph{heuristic that attempts to minimize the
  sum of distances} between the root and all other nodes in spanning
trees,  whereas link-cut trees \cite{Sleator_Tarjan_85} decompose
  spanning trees into a set of paths and nodes.  The second class is
based on \emph{Euler tour trees} \cite{henzinger1995randomized}, which
are balanced binary trees together with a partitioning of the edges
into levels, to get the first polylogarithmic time bound for deleting
edges \cite{henzinger1999randomized}.  The third category are
\emph{height-bounded trees} to improve the time bounds of balanced
binary trees.  Height-bounded trees come in two variants: structural
trees \cite{thorup2000near, wulff2013faster} are $k$-ary trees ($k$
not being constrained) while local trees \cite{thorup2000near,
  huang2017fully, wulff2013faster, huang2023fully} are binary trees
based on rank trees.

In this empirical study we comprehensively evaluate the performance of
solutions for the dynamic connectivity problem.  Our goal is to
understand the tradeoffs along three main dimensions.  The first
dimension is the balancedness of spanning trees.  Balanced trees
provide guarantees for the performance of operations but do not allow
to directly represent the edges of the graph by edges of the spanning
tree since the organization of the nodes in the spanning tree is
determined by the balancedness criteria.  This leads to a
non-negligible overhead for balanced trees.  The second dimension is a
partitioned representation of spanning trees.  A partitioning of the
nodes and edges into multiple levels yields faster operations per
level but additional data structures are needed to support the
partitioning and the construction and maintenance of the partitioning
when the graph changes are expensive.  The third dimension are space-
versus deletion-efficient data structures and algorithms.  The most
time-consuming spanning tree operation is the deletion of an edge that
breaks a component into two components.  In this case it must be
checked if a replacement edge exists that reconnects the two
components.  Deletion-efficient data structures focus on solutions to
improve the worst case performance of such deletions by introducing
auxiliary data structures and considering amortized costs.  While the
approaches successfully curb the worst case performance and the
amortized costs for some classes of workloads they come with hefty
overheads. Space-efficient data structures (D-tree, LCT, ST)
  trade worst case guarantees for lightweight data structures and
  algorithms.  We find that so far 
  no solution achieves a good trade-off among
  memory footprints, tree heights and maintenance cost.

We provide the first comprehensive experimental study for all major
data structures for connectivity queries to guide future research on
this topic.  Our main technical contributions can be summarized as
follows:
\begin{itemize}
\item We implement all major data structures for the dynamic
  connectivity problem: D-tree, link-cut tree LCT, HK, HKS
  (a simplified version of HK), HDT, structural tree ST, a variant STV
  of structural trees, local tree LT, a variant LTV of local trees,
  and lazy local trees LzT.  We provide reference implementations for
  all solutions, some of which have never been implemented before.
\item We extensively evaluate all major data structures on large
  real-world and synthetic graphs with a wide range of workloads. We
  generate workloads that decouple the dependency between the
  insertion and deletion of edges to permit a fine-grained
  control of the growth rate of the graph, and show that lazy local
    trees with the lowest amortized costs are the slowest in terms of
    empirically determined runtime.
\item We leverage our insights from extensive implementations and
  evaluations to offer lessons learned and we 
  provide recommendations for
  future work to pave the path for the first practical 
  and robust data structure for connectivity queries over 
  fully dynamic graphs.
\end{itemize}

\section{Background}
\label{sec:preliminary}

We consider undirected unweighted simple graphs $G = (V, E)$ defined by a
set $V$ of vertices and a set $E$ of edges 
\cite{gibbons1985algorithmic, west2001introduction, 
bonifati2022querying}.  In an
\emph{undirected simple graph}, $(u, v)$ and $(v, u)$ are the 
same edge.  $adj(u)$ denotes the set of vertices
that are directly connected to vertex $u$, $i.e.,$ $adj(u)$ $=$
$\{v~|~v \in V, (u, v)\in E\}$. A \emph{path} $P$ is a sequence of $m$
distinct vertices ($v_1$, $v_2$, ..., $v_m$) where $v_i$ $\in$ $V$ and
every two neighboring vertices $v_i$ and $v_{i+1}$ are connected by
edges $(v_i, v_{i+1})$ $\in$ $E$.  If there are
edges connecting $v_1$ and $v_m$, the sequence ($v_1$, $v_2$, ...,
$v_m$, $v_1$) is a \emph{cycle}.  The \emph{diameter} of a graph is
the length of the longest shortest path between two vertices.  A
\emph{connected component} $C = (V_c, E_c)$ of a graph $G = (V, E)$ is a
maximal subgraph, with $V_c \subseteq V$ and $E_c \subseteq E$, such
that any pair of vertices in $C$ is connected by a path.  A
\emph{tree} is an undirected graph in which any pair of nodes is
connected by exactly one path and there are no cycles.  In a
\emph{rooted tree} there is a designated root node.  Given a connected
component $C = (V_c, E_c)$, a \emph{spanning tree} for $C$ is a rooted
tree $st = (V', E')$ with $V' = V_c$ and $E'$ $\subseteq$ $E_c$.  
An edge $(u, v)$ is a \emph{tree edge} if $(u, v)$
$\in$ $E'$, otherwise it is a \emph{non-tree edge}.  
If the insertion of an edge $(u, v)$ connects two spanning trees, 
$(u, v)$ is a tree edge for the merged spanning tree.
For example, in 
Figure~\ref{fig:exam_c_t}, edges (1, 2), (1, 3) and (3, 6) are
non-tree edges while all other edges are tree edges.  
A \emph{spanning forest} is a set of spanning trees. 

\setcounter{figure}{0}  
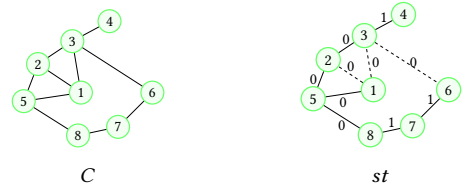
\begin{figure}[htb!] 
  \begin{subfigure}[b]{0.45\columnwidth}\centering
  \scalebox{0.38}{  
  \begin{tikzpicture}
    \node[roundnode] (1)  at (-1.5, -1.0)  {1};
    \node[roundnode] (2)  at (-3, 0.0)   {2};
    \node[roundnode] (3)  at (-1.8, 0.8)  {3};
    \node[roundnode] (4)  at (-0.5, 1.5) {4};
    \node[roundnode] (5)  at (-3.5, -1.3)  {5};

    \node[roundnode] (6)  at (1.0, -1.0) {6};
    \node[roundnode] (7)  at (-0.2, -2.2)  {7};
    \node[roundnode] (8)  at (-1.6, -2.5)  {8};

    \path[-] (1) edge node[above] {} (2);
    \path[-] (1) edge node[midway, right] {} (3);
    \path[-] (2) edge node[above] {} (3);
    \path[-] (3) edge node[above] {} (4);
    \path[-] (2) edge node[left] {} (5);
    \path[-] (1) edge node[below] {} (5);
    \path[-] (3) edge node[right] {} (6);
    
    \path[-] (7) edge node[above] {} (6);
    \path[-] (7) edge node[above] {} (8);
    \path[-] (8) edge node[below] {} (5);
  \end{tikzpicture}
  }
  \caption*{$C$}
  \end{subfigure}
  \begin{subfigure}[b]{0.45\columnwidth}\centering
    \scalebox{0.40}{  
    \begin{tikzpicture}
      \node[roundnode] (1)  at (-1.5, -1.0)  {1};
      \node[roundnode] (2)  at (-3, 0.0)  {2};
      \node[roundnode] (3)  at (-1.8, 0.8)  {3};
      \node[roundnode] (4)  at (-0.5, 1.5)  {4};
      \node[roundnode] (5)  at (-3.5, -1.3)  {5};

      \path[-] (1) edge node[below] {0} (5);
      \path[-] (2) edge node[above] {0} (3);
      \path[-] (3) edge node[above] {1} (4);
      \path[-] (2) edge node[left] {0} (5);
      
    
      \node[roundnode] (6)  at (1.0, -1.0) {6};
      \node[roundnode] (7)  at (-0.2, -2.2) {7};
      \node[roundnode] (8)  at (-1.6, -2.5) {8};
      
      \path[-] (7) edge node[above] {1} (6);
      \path[-] (7) edge node[above] {1} (8);
      \path[-] (8) edge node[below] {0} (5);

      \path[dashed] (3) edge node[right] {0} (6);
      \path[dashed] (1) edge node[above] {0} (2);
      \path[dashed] (1) edge node[midway, right] {0} (3);
    \end{tikzpicture}
    }
    \caption*{$st$}
  \end{subfigure}
  
  \caption{Connected component $C$ and spanning tree $st$ for $C$. 
  Edges are labeled with their levels (described below). Dashed edges
  are non-tree edges for $st$.}
  \label{fig:exam_c_t}
\end{figure}

\subsection{Connectivity queries on tree-data structures}

Given a graph, a connectivity query for two vertices returns true if
there exists a path between the two vertices, otherwise false.  All
tree-data structures maintain a rooted tree for each connected
component of the graph. Answering connectivity queries using tree-data
structures boils down to checking if two nodes have the same root
node, $i.e.,$ they are in the same tree.

\begin{definition}[Connectivity queries on tree-data structures]
  Given a tree-data structure for a graph $G=(V, E)$ and two nodes
  $u$, $v$ $\in$ $V$, the connectivity query $conn(u, v)$ returns True
  if nodes $u$ and $v$ have the same root node in the tree, and False
  otherwise.
\end{definition}

Answering connectivity queries on tree-data structures is traversing
to the root node and hence the query performance is bounded by the
tree heights.  The data structures are based on spanning
trees and need to maintain themselves 
due to the insertions and deletions of edges. Inserting and
deleting a non-tree edge does not change the spanning tree and 
hence is trivial to handle. The insertion of a tree edge 
merges two spanning trees. Deleting a tree edge splits
a spanning tree into two, a replacement edge that reconnects 
the two spanning trees is searched.

\subsection{Partitioning Edges by Levels}
\label{sec:partitioning_level}

Existing works maintain data structures and assign a non-negative 
integer, called $level$, to every
edge. Placing edges at different levels yields better theoretical
amortized costs \cite{tarjan1985amortized} 
when searching for a replacement edge after a tree
edge has been deleted (details follow in
Section~\ref{sec:delete_edge}).  The level of an edge $(u, v)$,
denoted as $l(u, v)$, is 0 when $(u, v)$ is inserted.  When an edge
$(x, y)$ is deleted, we do an exhaustive search, breadth first search
(BFS) or depth first search (DFS), from $x$ or $y$ to find a
replacement edge.  During this search, if $(u, v)$ is not traversed,
$l(u, v)$ remains unchanged.  If $(u, v)$ is traversed, $l(u, v)$ is
increased by 1 up to a bound that depends on the heuristics used by
the method~\cite{henzinger1999randomized, wulff2013faster,
  thorup2000near, huang2017fully}.  Let $l_{max}$ be the maximum level
on edges.  Increasing the level of an edge can be interpreted as
pushing down the edge.  Edges with a larger level are considered
before edges with a smaller level.  Levels of edges vary since some
edges are traversed multiple times while other edges are never
traversed during the search for replacement edges. 
We write $E'_i$ to refer to level-$i$ edges in $E'$, hence
$E'_i$ $\cap$ $E'_j$ $=$ $\emptyset$ for 0 $\leq$ $i$ $<$ $j$ $\leq$
$l_{max}$, and $E'$ = $\cup_{i=0}^{i \leq l_{max}}$ $E'_i$.  The level
of a node $u$, denoted as $l(u)$, is equal to 1 plus the maximum level
on edges that are directly connected to $u$, $i.e.,$ $l(u)$ $=$ 1 $+$
$max\{l(u, x)$ | $(u, x)$ $\in$ $E'$$\}$.  We use the term vertices
for graphs and the term nodes for trees.

We define level-$i$ tree and non-tree neighbors for each node to
facilitate the traversal over level-$i$ nodes.  Without
differentiating between level-$i$ tree and non-tree neighbors, all
level-$i$ edges must be searched during traversals.

\begin{definition} [level-$i$ tree and non-tree neighbors]
  Given a spanning tree $st = (V', E')$ with $E'$ = 
  $\cup_{i=0}^{i \leq l_{max}}$ $E'_i$, the \emph{level-$i$ tree
    neighbors} of a node $u$ are $adj^t_i(u)$ $=$
  $\{ v~|~v \in V', (u, v) \in E'_i \}$.  The \emph{level-$i$ non-tree
    neighbors} of a node $u$ are
  $adj^{nt}_i(u) = \{ v~|~v \in V', l(u, v) = i, (u, v) \not\in E'_i
  \}$.
\end{definition}

\begin{definition} [level-$i$ neighbors]
  The \emph{level-$i$ neighbors} of a node $u$ are
  $adj_i(u) = adj^t_i(u) \cup adj^{nt}_i(u)$.
\end{definition}

\begin{example}
  In Figure~\ref{fig:exam_c_t}, for vertex 3 $adj^t_0(3)$ = $\{2\}$,
  $adj^{nt}_0(3)$ = $\{1, 6\}$, $adj^t_1(3)$ = $\{4\}$, and
  $adj^{nt}_1(3)$ is empty.  The level-0 tree edge $(2, 3)$ is stored
  in $adj^t_0(2)$ and $adj^t_0(3)$, respectively, as 3 $\in$
  $adj^t_0(2)$ and 2 $\in$ $adj^t_0(3)$.  Similarly, the
  level-0 non-tree edge $(1, 3)$ is stored in $adj^{nt}_0(1)$ and
  $adj^{nt}_0(3)$, respectively.
\end{example}

Existing data structures maintain spanning trees, cumulative 
spanning trees, or recursive spanning trees.  These trees keep 
track of the nodes in spanning trees (for details, see
Section~\ref{sec:maintaining_ds}).  Edges with levels are maintained
in two different ways.  One way maintains the edges cumulatively,
i.e., edges with larger levels and those with 
smaller levels are maintained together. 
The other way maintains the edges at each level separately.

\subsection{Cumulative Spanning Trees}

\begin{definition} [Level-i Cumulative Spanning Tree]
  Consider a spanning tree $st = (V', E')$ with nodes $V'$ and edges
  $E'$ = $\cup_{i=0}^{i \leq l_{max}}$ $E'_i$.  The level-$i$
  cumulative spanning tree for $st$ is $_c\mathcal{ST}_i$ $=$
  $(_c\mathcal{V}_i,~ _c\mathcal{E}_i)$ with $_c\mathcal{E}_i$ =
  $\cup_{j=i}^{j \leq l_{max}}$ $E'_j$ and $_c\mathcal{V}_i$ $=$
  $\cup_{j=i}^{j \leq l_{max}}$
  $\{v~|~v \in V', \exists x \in V', (v, x)\in E'_j \}$.
\end{definition}

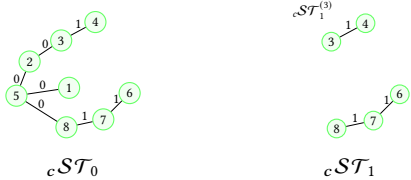
\begin{figure}[htb!] 
  \begin{subfigure}[b]{0.45\columnwidth}\centering
  \scalebox{0.35}{  
  \begin{tikzpicture}
    \node[roundnode] (1)  at (-1.5, -1.0)  {1};
    \node[roundnode] (2)  at (-3, 0.0)  {2};
    \node[roundnode] (3)  at (-1.8, 0.8)  {3};
    \node[roundnode] (4)  at (-0.5, 1.5)  {4};
    \node[roundnode] (5)  at (-3.5, -1.3)  {5};

    \path[-] (1) edge node[above] {0} (5);
    \path[-] (2) edge node[above] {0} (3);
    \path[-] (3) edge node[above] {1} (4);
    \path[-] (2) edge node[left] {0} (5);
  
    \node[roundnode] (6)  at (0.8, -1.2) {6};
    \node[roundnode] (7)  at (-0.2, -2.2) {7};
    \node[roundnode] (8)  at (-1.6, -2.5) {8};
    
    \path[-] (7) edge node[above] {1} (6);
    \path[-] (7) edge node[above] {1} (8);
    \path[-] (8) edge node[above] {0} (5);
  \end{tikzpicture}
  }
  \caption*{$_c\mathcal{ST}_0$}
  \end{subfigure}
  \begin{subfigure}[b]{0.40\columnwidth}\centering
    \scalebox{0.35}{ 
    \begin{tikzpicture}[roundnode/.style={circle, draw=green!60, 
      fill=green!5, very thick, minimum size=4mm}]
      \node[roundnode] (3)  at (-1.8, 0.8)  {3};
      \node[roundnode] (4)  at (-0.5, 1.5)  {4};
      \path[-] (3) edge node[above] {1} (4);
      \node () at (-2.5, 2.0) {$_c\mathcal{ST}_1^{(3)}$};

      \node[roundnode] (6)  at (0.8, -1.2) {6};
      \node[roundnode] (7)  at (-0.2, -2.2) {7};
      \node[roundnode] (8)  at (-1.6, -2.5) {8};
      
      \path[-] (7) edge node[above] {1} (6);
      \path[-] (7) edge node[above] {1} (8);
    \end{tikzpicture}
    }
    \caption*{$_c\mathcal{ST}_1$}
  \end{subfigure}
  
  \caption{Cumulative spanning trees for 
  $C$ in Figure~\ref{fig:exam_c_t}. $_c\mathcal{ST}_1^{(3)}$ denotes the
  level-1 cumulative spanning tree that contains node 3.}
  \label{fig:exam_cst}
\end{figure}

\begin{example}
  Figure~\ref{fig:exam_cst} shows the level-0 cumulative spanning tree
  $_c\mathcal{ST}^0$ for $C$ in Figure~\ref{fig:exam_c_t} with level-0
  and level-1 edges.  In $_c\mathcal{ST}^0$, vertex 3 is a level-2
  vertex since edge $(3, 4)$ has the maximal level, and vertex 2 is a
  level-1 vertex since levels of all edges directly connected to
  vertex 2 are 0.
\end{example}

\subsection{Recursive Spanning Trees}

Recursive spanning trees use super nodes to maintain the edges and
nodes of the next larger level.  The level-$l_{max}$ recursive
spanning tree contains level-$l_{max}$ edges and level-($l_{max} + 1$)
nodes.  All level-$l_{max}$ recursive spanning trees are super nodes
of level-$l_{max}$ in level-($l_{max} - 1$) spanning trees.  Thus, a
level-($l_{max} - 1$) spanning tree contains level-($l_{max} - 1$)
edges, level-$l_{max}$ nodes, and level-$l_{max}$ super nodes.

\begin{definition}[Level-i Recursive Spanning Tree]
  Given a spanning tree $st = (V', E')$ with $E'$ =
   $\cup_{i=0}^{i \leq l_{max}}$ $E'_i$, 
  a level-$i$ recursive spanning
  tree for $st$ is $_r\mathcal{ST}_i$ $=$ $(_r\mathcal{V}_i,~ E_i')$
  where for each $v$ $\in$ $_r\mathcal{V}^i$, (1) $v$ is a
  level-$(i+1)$super node (cf.\ Definition~\ref{def:1}) or (2) $v$
  $\in$ $V'$ and $v$ is a level-$(i+1)$ node.  Any pair of nodes in
  $_r\mathcal{V}_{i}$ are connected via level-$i$ edges.
\end{definition}

Figure~\ref{fig:exam_rst} shows a level-$0$ spanning tree
$_r\mathcal{ST}_0$ in which $s_1$ and $s_1'$ are level-$1$ super nodes
while nodes 1, 2, and 5 are level-$(i+1)$ nodes.  All nodes of
$_r\mathcal{ST}_0$ are connected via level-$0$ edges.

\begin{figure}[htb!]
  \begin{subfigure}{0.30\columnwidth}\centering
    \scalebox{0.35}{
    \begin{tikzpicture}[roundnode/.style={circle, draw=green!60, 
    fill=green!5, very thick, minimum size=4mm}]
      \node[roundnode] (1)  at (-1.5, -1.0)  {1};
      \node[roundnode] (2)  at (-3, 0.0)  {2};
      \node[roundnode] (5)  at (-3.5, -1.3)  {5};
      \node[roundnode] (s1)  at (-1.6, 0.9)  {$s_1$};
      \path[-] (1) edge node[above] {0} (5);
      \path[-] (2) edge node[above] {0} (s1) ;
      \path[-] (2) edge node[left] {0} (5);
      \node[roundnode] (s1p)  at (-1.6, -2.5) {$s_1'$};
      \path[-] (s1p) edge node[below] {0} (5);
    \end{tikzpicture}
    }
    \caption*{$_r\mathcal{ST}_0$}
  \end{subfigure}
  \begin{subfigure}[b]{0.65\columnwidth}\centering
    \scalebox{0.35}{ 
    \begin{tikzpicture}[roundnode/.style={circle, draw=green!60, 
      fill=green!5, very thick, minimum size=4mm}]
      \node[roundnode] (3)  at (-3.8, 0.3)  {3};
      \node[roundnode] (4)  at (-2.5, 1.0)  {4};
      \path[-] (3) edge node[above] {1} (4);
      \draw [rotate=25] (-2.5, 1.85) ellipse (1.4cm and 0.7cm);
      \node  at (-3.0, -0.5)  {$s_1$};
      \node () at (-2.5, 2.0) {$_r\mathcal{ST}_1^{(3)}$};

      \node[roundnode] (6)  at (1.8, 0.8) {6};
      \node[roundnode] (7)  at (0.8, -0.2) {7};
      \node[roundnode] (8)  at (-0.6, -0.5) {8};
      \draw [rotate=27] (0.6, -0.25) ellipse (2.1cm and 0.8cm);
      \node  at  (1.5, -0.7)  {$s_1'$};
      
      \path[-] (7) edge node[above] {1} (6);
      \path[-] (7) edge node[above] {1} (8);
    \end{tikzpicture}
    }
    \caption*{$_r\mathcal{ST}_1$}
  \end{subfigure}
  \caption{Recursive spanning trees for $C$ in 
  Figure~\ref{fig:exam_c_t}. $_r\mathcal{ST}_1^{(3)}$ denotes the
  level-1 recursive spanning tree that contains node 3.}
  \label{fig:exam_rst}
\end{figure}
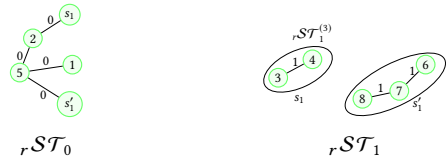

\begin{definition}[Level-(i+1) super node]\label{def:1}
  Given a spanning tree $st = (V', E')$ with $E'$ =
  $\cup_{i=0}^{i \leq l_{max}}$ $E'_i$, 
  a level-$(i+1)$ super node is
  $s_i$ $=$ $(\mathcal{V}_{s_i}, E_{s_i}')$ with $E_{s_i}'$
  $\subseteq$ $E_{i+1}'$.  For each node $v$ $\in$
  $\mathcal{V}_{s_i}$, $v$ is a level-$(i+2)$ super node or a
  level-$(i+2)$ node in $V'$.  Any pair of nodes in
  $\mathcal{V}_{s_i}$ are connected via level-($i+1$) edges.
\end{definition}

To simplify notation, we write $st_i$ to denote either a level-$i$
cumulative or a recursive spanning tree. If there are more than
one level-$i$ spanning trees, $st_i$ is a spanning forest. We use
$st_i^{(u)}$ to denote the level-$i$ spanning tree that contains
node $u$.

\subsection{Inserting Edges}

When an edge $(u, v)$ is inserted, we must determine if
$(u, v)$ is a tree or non-tree edge when it is added to the spanning
tree.  If it is a tree edge the structure of the spanning tree changes
since components must be merged.  To check if $(u, v)$ is a non-tree
edge we run a connectivity query $conn(u, v)$ on level-0 spanning
trees.  If $conn(u, v)$ returns true, $(u, v)$ is a non-tree edge
otherwise $(u, v)$ is a tree edge.  The levels of newly inserted edges
are set to 0.  If $(u, v)$ is a non-tree edge, we add $u$ to
$adj^{nt}_0(v)$ and $v$ to $adj^{nt}_0(u)$.  If $(u, v)$ is a tree
edge, inserting $(u, v)$ merges the level-0 spanning trees that
contain, respectively, $u$ and $v$. For some data structures, edges
with small levels are pushed down after the insertions of edges.

\subsection{Deleting Edges}
\label{sec:delete_edge}
When deleting a level-$i$ edge $(u, v)$, we first determine 
if $(u, v)$ is a non-tree edge or a tree edge. 
Edge $(u, v)$ is a level-$i$ non-tree edge if $v$ is in 
$adj^{nt}_i(u)$ and is a level-$i$ tree edge otherwise.  If $(u, v)$ 
is a level-$i$ non-tree edge, deleting the edge does not change the 
spanning tree.
We only need to remove $u$ and $v$ from $adj^{nt}_i(v)$ and
$adj^{nt}_i(u)$, respectively.  If $(u, v)$ is a level-$i$ tree edge,
the level-$i$ spanning tree $st_i^{(u, v)}$ containing $u$ and $v$ is
split into two level-$i$ spanning trees: $st_i^{(u)}$ containing $u$
and $st_i^{(v)}$ containing $v$.  Let $st_i^{(u)}$ be the tree with
fewer nodes.  We need to know if there is a level-$i$ non-tree edge,
called \emph{replacement edge}, that reconnects $st_i^{(u)}$ and
$st_i^{(v)}$.  We search $st_i^{(u)}$ for a replacement edge. If there
is a level-$i$ non-tree edge $(tx, ty)$ that reconnects $st_i^{(u)}$
and $st_i^{(v)}$, we stop the search and turn $(tx, ty)$ into a
level-$i$ tree edge, otherwise we move to level $i-1$ to search for a
level-$(i-1)$ non-tree edge that reconnects $st_i^{(u)}$ and
$st_i^{(v)}$ until we either reach level 0 or find a replacement edge.
In both situations, the level-$i$ tree edges and level-$i$ non-tree
edges traversed during the search are pushed to level-$(i+1)$.
Pushing level-$i$ edges to level-$(i + 1)$ saves costs for future
deletions of level-$i$ edges since we do not search edges whose levels
are larger than $i$.  Algorithm~\ref{alg:delete} shows the general
procedure for deleting a level-$i$ tree edge.

\begin{algorithm2e}[htb!]
  \small
  \caption{Delete level-$i$ tree edge $(u, v, i)$}
  \label{alg:delete}
  Remove $u$ and $v$ from $adj^t_i(v)$ and $adj^t_i(u)$, respectively\;
  $st^{(u, v)}_i$ splits to $st^{(u)}_i$ and $st^{(v)}_i$\;
  \While{$i$ $\geq$ 0} {
    \For{y $\in$ $\{adj^{nt}_i(x)$ $|$ x $\in$ $st^{(u)}_i\}$ }{
      \If{(tx, ty) reconnects $st^{(u)}_i$ and $st^{(v)}_i$}{
        insert (tx, ty) as a level-$i$ tree edge\;
        merge $st^{(u)}_i$ and $st^{(v)}_i$ into $st^{(u, v)}_i$\;
        push down level-$i$ tree edges and/or non-tree edges\;
        \Return\;
      }
    }
    $i$ $--$\\
  }
\end{algorithm2e}

\section{Major Data Structures}
\label{sec:maintaining_ds}

We classify existing data structures into three categories based on
the type of spanning trees they are maintaining.  In the first
category , the D-tree~\cite{dtree} is a spanning tree without levels,
and link-cut tree~\cite{Sleator_Tarjan_81} 
maintains a set of splay trees~\cite{Sleator_Tarjan_85} to present 
spanning trees without levels.
In the second category, HK~\cite{henzinger1999randomized},
HKS~\cite{David_HKvariant} (a simplified version of HK), and
HDT~\cite{HDT} use Euler Tour trees (ET-trees)
~\cite{henzinger1999randomized} to maintain cumulative spanning
trees. The third category, structural trees ~\cite{thorup2000near,
  wulff2013faster}, local trees~\cite{thorup2000near,
  wulff2013faster}, and lazy local trees~\cite{thorup2000near,
  wulff2013faster, huang2017fully}, maintain height bounded recursive
spanning trees.  When describing the maintenance of data
structures, we focus on discussing operations for inserting and
deleting tree edges as operations for non-tree edges are
trivial. Inserting and deleting non-tree edges do not change spanning
trees, we simply update non-tree neighbors of the nodes.

\subsection{D-tree}

Tree-data structures with nodes that have a high fanout tend to be
shallow, resulting in efficient runtime for answering connectivity
queries.  D-trees apply this principle to spanning trees by minimizing
parameter $S_d$, the sum of distances between the nodes in the tree
and the root node. Since D-trees process connectivity queries by
traversing from query nodes to root nodes (to check if the query nodes
are located in the same tree), minimizing $S_d$ results in a low
average runtime.

Constructing a
BFS tree results in a spanning tree with an optimal value for $S_d$.
However, maintaining optimal BFS-trees is too expensive for large
dynamic graphs, so D-trees employ heuristics to keep the value of
$S_d$ low (for details, see~\cite{dtree}). A side effect of having a
tree with a low value for $S_d$ is that deleting a tree edge $(u,v)$
usually splits a spanning tree into a large tree $st^{(v)}$,
containing $v$, and a small tree $st^{(u)}$, containing $u$ (w.l.o.g.
we assume the large tree contains $v$). After deleting the edge
$(u,v)$, we traverse $st^{(u)}$ to search for a replacement (non-tree)
edge.  If we find one we can reconnect $st^{(v)}$ and $st^{(u)}$, i.e.,
the nodes in $st^{(v)}$ and $st^{(u)}$ are still connected in the
graph. Usually, $st^{(u)}$ contains a very small number of nodes,
often fewer than ten~\cite{dtree}.  If multiple replacement edges
exist, we choose the one that re-attaches $st^{(u)}$ to the node that is
closest to the root of $st^{(v)}$ to keep the tree as shallow as
possible. In order to reconnect $st^{(u)}$, we may have to reroot
it. This may also be the case when inserting a new tree edge
connecting two previously unconnected components~\cite{dtree}.

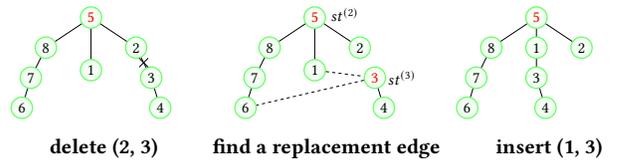
\begin{figure}[htb!] 
  \begin{subfigure}[b]{0.30\columnwidth}
  \scalebox{0.40}{  
    \begin{tikzpicture}[roundnode/.style={circle, draw=green!60, fill=green!5, very thick, minimum size=4mm}]
      \node[roundnode] (5)  at (0, 0)  {\textcolor{red}{5}};
      \node[roundnode] (8)  at (-1.5, -1.0) {8};
      \node[roundnode] (1)  at (0.0, -1.75)  {1};
      \node[roundnode] (2)  at (1.5, -1.0)  {2};
      
      \node[roundnode] (7)  at (-2.0, -2.0) {7};
      \node[roundnode] (3)  at (2.0, -2.0)  {3};
      
      \node[roundnode] (6)  at (-2.3, -3.0) {6};
      \node[roundnode] (4)  at (2.3, -3.0) {4};
      
      \draw plot[only marks,mark=x, mark size=6pt] coordinates {(1.75, -1.5)};
      
      \path[-] (2) edge node[above] {} (3);
      \path[-] (3) edge node[above] {} (4);
      \path[-] (2) edge node[left] {} (5);
      \path[-] (1) edge node[left] {} (5);
      
      \path[-] (7) edge node[above] {} (6);
      \path[-] (7) edge node[above] {} (8);
      \path[-] (8) edge node[above] {} (5);
       
    \end{tikzpicture}
  }
  \caption*{delete (2, 3)}
  \end{subfigure}
  \begin{subfigure}[b]{0.38\columnwidth}\centering
  \scalebox{0.40}{  
    \begin{tikzpicture}[roundnode/.style={circle, draw=green!60, fill=green!5, very thick, minimum size=4mm}]
      \node[roundnode] (5)  at (0, 0)  {\textcolor{red}{5}};
      \node[roundnode] (8)  at (-1.5, -1.0) {8};
      \node[roundnode] (1)  at (0.0, -1.75)  {1};
      \node[roundnode] (2)  at (1.5, -1.0)  {2};
      \node () at (1.0, 0.1) {$st^{(2)}$};

      \node[roundnode] (7)  at (-2.0, -2.0) {7};
      \node[roundnode] (3)  at (2.0, -2.0)  {\textcolor{red}{3}};
      \node () at (2.9, -2.0) {$st^{(3)}$};

      \node[roundnode] (6)  at (-2.3, -3.0) {6};
      \node[roundnode] (4)  at (2.3, -3.0) {4};

      \path[-] (3) edge node[above] {} (4);
      \path[-] (2) edge node[left] {} (5);
      \path[-] (1) edge node[left] {} (5);
      
      \path[-, dashed] (1) edge node[left] {} (3);
      \path[-, dashed] (3) edge node[left] {} (6);

      \path[-] (7) edge node[above] {} (6);
      \path[-] (7) edge node[above] {} (8);
      \path[-] (8) edge node[above] {} (5);
       
  \end{tikzpicture}
  }
  \caption*{find a replacement edge}
  \end{subfigure}
  \begin{subfigure}[b]{0.30\columnwidth}
  \scalebox{0.40}{  
    \begin{tikzpicture}[roundnode/.style={circle, draw=green!60, fill=green!5, very thick, minimum size=4mm}]
        \node[roundnode] (5)  at (0, 0)  {\textcolor{red}{5}};
        \node[roundnode] (8)  at (-1.5, -1.0) {8};
        \node[roundnode] (1)  at (0.0, -1.0)  {1};
        \node[roundnode] (2)  at (1.5, -1.0)  {2};
        
        \node[roundnode] (7)  at (-2.0, -2.0) {7};
        \node[roundnode] (3)  at (0.0, -2.0)  {3};
        
        \node[roundnode] (6)  at (-2.3, -3.0) {6};
        \node[roundnode] (4)  at (0.3, -3.0) {4};

        \path[-] (1) edge node[above] {} (3);
        \path[-] (3) edge node[above] {} (4);
        \path[-] (2) edge node[left] {} (5);
        \path[-] (1) edge node[left] {} (5);
        
        \path[-] (7) edge node[above] {} (6);
        \path[-] (7) edge node[above] {} (8);
        \path[-] (8) edge node[above] {} (5);
         
    \end{tikzpicture}
  }
  \caption*{insert (1, 3)}
  \end{subfigure}
  \caption{Operations for delete tree edge (2, 3) in a D-tree
  for component $C$ of Figure~\ref{fig:exam_c_t}.  
  Nodes with red labels are root nodes.}
  \label{fig:exam_dtree}
\end{figure}

\begin{example} \label{exam:dtree_delete}
  In Figure~\ref{fig:exam_dtree}, after tree edge (2, 3) is deleted,
  the D-tree is split to $st^{(2)}$ containing node 2 and $st^{(3)}$
  containing node 3.  The small tree $st^{(3)}$ is traversed to search
  for non-tree edges. The non-tree edges (1, 3) and (3, 6) reconnect
  $st^{(2)}$ and $st^{(3)}$.  Edge (1, 3) attaches $st^{(3)}$ to node
  1 while edge (3, 6) attaches $st^{(3)}$ to node 6. The non-tree edge
  (1, 3) is selected as inserting (1, 3) into the D-tree maintains a
  smaller $S_d$.
\end{example}

\subsection{Link-cut Tree}
\label{sec:lct}

Link-cut trees (LCTs) are a data structure that dynamically adjusts to the
workload, bringing down the amortized costs of the operations applied to
it~\cite{Sleator_Tarjan_81}. Rather than balancing a tree, an LCT partitions
it into separate paths, called \emph{preferred paths}, consisting of
\emph{preferred edges}. Nodes that are not part of a preferred path are called
\emph{isolated} (see Figure~\ref{fig:exam_lct}(a) for examples of preferred
paths and an isolated node 1). Every preferred path is stored as a splay tree
(binary search tree)~\cite{Sleator_Tarjan_85}, in which the nodes on the
preferred path are ordered according to their depth on the path (e.g., see
Figure~\ref{fig:exam_lct}(b) for the splay tree of path 2-3-4). Each isolated
node is the root node of a splay tree only containing itself. If edge $(u,v)$
is a preferred edge, nodes $u$ and $v$ are in the same splay tree, with $u$
being the predecessor of $v$. If $(u,v)$ is a non-preferred edge, $u$ and $v$
are found in different splay trees, here denoted $sp^{(u)}$ and $sp^{(v)}$,
respectively. Then $(u,v)$ is stored as a \emph{directed} pointer, called
\emph{path\_pointer}, from the root node of $sp^{(v)}$ to $u$ (in
Figure~\ref{fig:exam_lct}(b), the edges $(1,5)$ and $(3,5)$ are
path\_pointers).

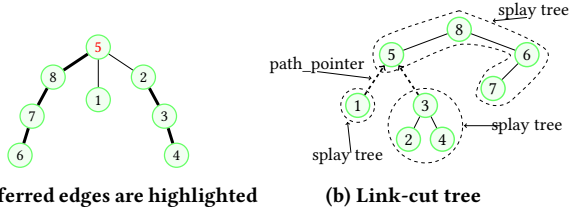
\begin{figure}[htb!] 
  \begin{subfigure}[b]{0.50\columnwidth} \centering
    \scalebox{0.4}{  
    \begin{tikzpicture}
      \node[roundnode] (5)  at (0, 0)  {\textcolor{red}{5}};
      \node[roundnode] (8)  at (-1.5, -1.0) {8};
      \node[roundnode] (1)  at (0.0, -1.75)  {1};
      \node[roundnode] (2)  at (1.5, -1.0)  {2};
      
      \node[roundnode] (7)  at (-2.2, -2.3) {7};
      \node[roundnode] (3)  at (2.2, -2.3)  {3};
      
      \node[roundnode] (6)  at (-2.6, -3.6) {6};
      \node[roundnode] (4)  at (2.6, -3.6) {4};
      
      \path[-, line width=1mm] (2) edge node[right] {} (3);
      \path[-, line width=1mm] (3) edge node[right] {} (4);
      \path[-] (2) edge node[above] {} (5);
      \path[-] (1) edge node[left] {} (5);
      
      \path[-, line width=1mm] (7) edge node[left] {} (6);
      \path[-, line width=1mm] (7) edge node[left] {} (8);
      \path[-, line width=1mm] (8) edge node[above] {} (5);
    \end{tikzpicture}
    }
    \caption{Preferred edges are highlighted}
    \label{fig:et_example}
  \end{subfigure}  
  \begin{subfigure} [b]{0.44\columnwidth} \centering
    \scalebox{0.45} {
      \begin{tikzpicture}[roundnode/.style={circle, draw=green!60, fill=green!5, very thick, minimum size=7mm}]
        \node[roundnode] (8)  at (0, 0.25)  {8};
        \node[roundnode] (5)  at (-2.0, -0.5)  {5};
        \node[roundnode] (6)  at (2.0, -0.5)  {6};
        \node[roundnode] (7)  at (1.0, -1.5)  {7};
        
        \path[-] (5) edge  (8);
        \path[-] (6) edge  (8);
        \path[-] (6) edge  (7);
        
        \draw [dashed] plot [smooth] coordinates {(0, 0.8) (-2.3, 0.0) (-2.5, -0.8) (-2.0, -1.0) 
        (0.0, -0.2) (1.5, -0.5) (0.8, -1.0) (0.5, -1.7) (1.0, -2.0) (1.6, -1.8) (2.6, -0.4) (0, 0.8)} ;
        \node at (2.2, 0.8) {splay tree};
        \draw[->] (2.0, 0.6) -- (1.0, 0.5);
        
        \node[roundnode] (1)  at (-3.0, -2.0)  {1};
        \draw [dashed] (-3.0, -2.0) circle [radius=5mm];
        \node at (-3.3, -3.5) {splay tree};
        \draw [->] (-3.2, -3.4) -- (-3.3, -2.5);
        
        \node at (-4.2, -0.8) {path\_pointer};
        \draw [->] (-4.2, -1.0) -- (-2.6, -1.2);
         
        \node[roundnode] (3)  at (-1.0, -2.0)  {3};
        \node[roundnode] (2)  at (-1.5, -3.0)  {2};
        \node[roundnode] (4)  at (-0.5, -3.0)  {4};
        \draw [dashed] (-1.0, -2.6) circle [radius=11mm];
        \node at (2.0, -2.6) {splay tree};
        \draw [->] (1.2, -2.6) -- (0.15, -2.6);
        
        \path[-] (3) edge  (2);
        \path[-] (3) edge  (4);
        
        \path[->, dashed, line width = 0.5mm] (1) edge  (5);
        \path[->, dashed, line width = 0.5mm] (3) edge  (5);
        
      \end{tikzpicture}
    }
    \caption{Link-cut tree}
  \end{subfigure}
  
  \caption{Consider the spanning tree in
      Figure~\ref{fig:exam_dtree}. Two preferred paths and the
      isolated node 1 of the spanning tree are stored as splay trees
      in the Link/cut tree.}
  \label{fig:exam_lct}
\end{figure}

We now turn to representing spanning trees with LCTs. Every non-leaf node in
the spanning tree has one preferred child, connected via a preferred
edge. However, an arbitrary tree edge of the spanning tree can either be a
preferred edge of a non-preferred edge.  Deleting a tree edge $(u, v)$ splits
the spanning tree $st^{(u, v)}$ into $st^{(u)}$ and $st^{(v)}$. The link-cut
tree $LCT^{(u, v)}$ for $st^{(u, v)}$ splits into two link-cut trees,
$LCT^{(u)}$ containing $u$ and $LCT^{(v)}$ containing $v$.  There are no
procedures in LCTs optimized to search for replacement edges during the
deletion of tree edges, though. The directedness of path\_pointers makes it
hard to efficiently search either $LCT^{(u)}$ or $LCT^{(v)}$ in a top-down
fashion, as we do not see subtrees connected via path\_pointers from a parent
node. Without auxiliary data structures, we have to traverse all the nodes to
determine the nodes contained in a specific LCT. Additionally, inserting a
replacement edge can merge two preferred paths. $W.l.o.g,$ assume that
$(t\_u, t\_v)$ is the replacement edge and the preferred path containing node
$t_v$ is attached to the preferred path containing $t_u$. All ancestors of
$t\_v$ in the preferred path become descendents of $t\_u$ and have to be
reordered.

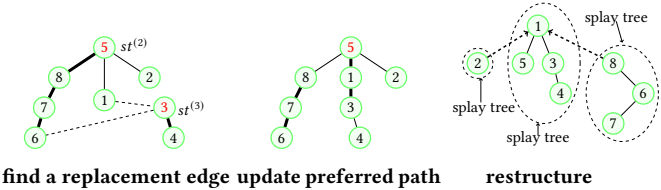
\begin{figure}[htb!] 
  \begin{subfigure}[b]{0.36\columnwidth}\centering
  \scalebox{0.40}{  
    \begin{tikzpicture}[roundnode/.style={circle, draw=green!60, fill=green!5, very thick, minimum size=4mm}]
      \node[roundnode] (5)  at (0, 0)  {\textcolor{red}{5}};
      \node[roundnode] (8)  at (-1.5, -1.0) {8};
      \node[roundnode] (1)  at (0.0, -1.75)  {1};
      \node[roundnode] (2)  at (1.5, -1.0)  {2};
      \node () at (1.0, 0.1) {$st^{(2)}$};

      \node[roundnode] (7)  at (-2.0, -2.0) {7};
      \node[roundnode] (3)  at (2.0, -2.0)  {\textcolor{red}{3}};
      \node () at (2.9, -2.0) {$st^{(3)}$};

      \node[roundnode] (6)  at (-2.3, -3.0) {6};
      \node[roundnode] (4)  at (2.3, -3.0) {4};

      \path[-, line width=1mm] (3) edge node[above] {} (4);
      \path[-] (2) edge node[left] {} (5);
      \path[-] (1) edge node[left] {} (5);
      
      \path[-, dashed] (1) edge node[left] {} (3);
      \path[-, dashed] (3) edge node[left] {} (6);

      \path[-, line width=1mm] (7) edge node[above] {} (6);
      \path[-, line width=1mm] (7) edge node[above] {} (8);
      \path[-, line width=1mm] (8) edge node[above] {} (5);
       
  \end{tikzpicture}
  }
  \caption*{find a replacement edge}
  \end{subfigure}
  \begin{subfigure}[b]{0.32\columnwidth}\centering
  \scalebox{0.40}{  
    \begin{tikzpicture}[roundnode/.style={circle, draw=green!60, fill=green!5, very thick, minimum size=4mm}]
        \node[roundnode] (5)  at (0, 0)  {\textcolor{red}{5}};
        \node[roundnode] (8)  at (-1.5, -1.0) {8};
        \node[roundnode] (1)  at (0.0, -1.0)  {1};
        \node[roundnode] (2)  at (1.5, -1.0)  {2};
        
        \node[roundnode] (7)  at (-2.0, -2.0) {7};
        \node[roundnode] (3)  at (0.0, -2.0)  {3};
        
        \node[roundnode] (6)  at (-2.3, -3.0) {6};
        \node[roundnode] (4)  at (0.3, -3.0) {4};

        \path[-, line width=1mm] (1) edge node[above] {} (3);
        \path[-] (3) edge node[above] {} (4);
        \path[-] (2) edge node[left] {} (5);
        \path[-, line width=1mm] (1) edge node[left] {} (5);
        
        \path[-, line width=1mm] (7) edge node[above] {} (6);
        \path[-, line width=1mm] (7) edge node[above] {} (8);
        \path[-] (8) edge node[above] {} (5);

    \end{tikzpicture}
  }
  \caption*{update preferred path}
  \end{subfigure}
  \begin{subfigure}[b]{0.29\columnwidth}
    \scalebox{0.40}{  
      \begin{tikzpicture}[roundnode/.style={circle, draw=green!60, fill=green!5, very thick, minimum size=7mm}]
        \node[roundnode] (1)  at (-2.0, 0.25)  {1};
        
        \node[roundnode] (5)  at (-2.5, -1.0)  {5};
        \node[roundnode] (3)  at (-1.5, -1.0)  {3};
        \node[roundnode] (4)  at (-1.25, -2.0)  {4};
        
        \draw [dashed] (-1.8, -1.0) ellipse (1.2cm and 2.0cm);
        \node at (-2.0, -3.5) {splay tree};
        \draw [->] (-2.0, -3.4) -- (-2.0, -3.0);
        
        \node[roundnode] (2)  at (-4.0, -1.0)  {2};
        \draw [dashed] (-4.0, -1.0) circle [radius=5mm];
        \node at (-3.8, -2.5) {splay tree};
        \draw [->] (-3.9, -2.3) -- (-3.9, -1.5);
        
        \node[roundnode] (8)  at (0.5, -1.0)  {8};

        \node[roundnode] (6)  at (1.5, -2.0)  {6};
        \node[roundnode] (7)  at (0.5, -3.0)  {7};
        
        \path[->, dashed, line width = 0.5mm] (2) edge  (1);
        \path[-] (6) edge  (8);
        \path[-] (6) edge  (7);
        
        \path[->, dashed, line width = 0.5mm] (8) edge  (1);
        
        \draw [dashed] (0.8, -2.0) ellipse (1.2cm and 1.6cm);
        \node at (0.6, 0.5) {splay tree};
        \draw [->] (0.6, 0.4) -- (0.8, -0.3);

        \path[-] (3) edge  (4);
        
        \path[-] (1) edge  (5);
        \path[-] (3) edge  (1);
        
      \end{tikzpicture}
    }
    \caption*{restructure}
  \end{subfigure}
  \caption{Delete tree edge (2, 3). Select edge (1, 3) as the replacement edge and insert (1, 3).}
  \label{fig:exam_lct_update}
\end{figure}

\subsection{ET-trees for HK, HKS and HDT}
\label{sec:et}

HK~\cite{henzinger1999randomized} and HDT~\cite{HDT} use ET-trees, implemented
with randomized search trees \cite{seidel1996randomized}, that have no total
orders of nodes to maintain cumulative spanning trees for every
level.\footnote{Note: 
for unweighted graphs each spanning tree is also
 a minimum spanning tree.} Merging two balanced binary trees takes $O(\log n)$ time.

An Euler Tour visits all nodes of a spanning tree, starting and ending
at the same node (see the example shown
Figure~\ref{fig:et_example}) \cite{tarjan1985efficient, David_HKvariant}. 
Each edge is visited twice and each node, except leaf nodes,
is visited multiple times, depending on the
number of tree neighbors of the node. We write $u_i$ to denote the
$i$-th occurrence of node $u$ in the Euler Tour.  For an edge $(u, v)$
of the spanning tree, we write <$u_x, v_y$> and <$v_p, u_q$> to denote
the first and second visits of $u$ and $v$ by the Euler tour,
respectively. Thus, edge $(u, v)$ is associated with the four
occurrences $u_x$, $v_y$, $v_p$ and $u_q$ in the Euler tour.  Consider
a spanning tree with $n$ nodes, an Euler tour includes $2n - 1$
occurrences of the nodes.  The ET-tree is a transformation of the
Euler Tour into a balanced binary tree, such that each occurrence
$u_i$ in the Euler Tour is associated with a node with key $u_i$ in
the ET-tree.  Note that the sequence of occurrences of nodes in
the inorder traversal of this balanced binary tree is equal to the
Euler Tour.  The ET-tree does not preserve the tree edges of the
spanning tree and hence ET-trees must maintain tree edge information
explicitly (as attribute values of the nodes in the ET-tree; 
see Section \ref{sec:imp_hks}).

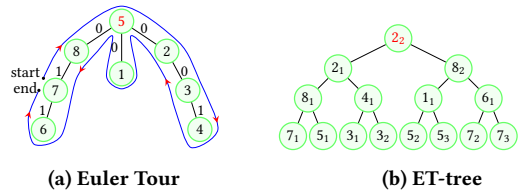
\begin{figure}[htb!] 
  \begin{subfigure}[b]{0.49\columnwidth} \centering
    \scalebox{0.40}{  
    \begin{tikzpicture}
      \node[roundnode] (5)  at (0, 0)  {\textcolor{red}{5}};
      \node[roundnode] (8)  at (-1.5, -1.0) {8};
      \node[roundnode] (1)  at (0.0, -1.75)  {1};
      \node[roundnode] (2)  at (1.5, -1.0)  {2};
      
      \node[roundnode] (7)  at (-2.2, -2.3) {7};
      \node[roundnode] (3)  at (2.2, -2.3)  {3};
      
      \node[roundnode] (6)  at (-2.6, -3.6) {6};
      \node[roundnode] (4)  at (2.6, -3.6) {4};

      \node (7_1)  at (-2.6, -1.9) {};
      \node (s)  at (-3.2, -1.7) {start};
      \node (8_1)  at (-1.9, -0.7) {};
      \node (5_1)  at (0.0, 0.5) {};
      \node (2_1)  at (1.9, -0.7) {};
      \node (3_1)  at (2.6, -2.0)  {};
      \node (4_1)  at (3.2, -3.6) {};
      \node (3_2)  at (1.5, -2.0)  {};
      \node (2_2)  at (0.9, -1.0)  {};
      \node (5_2)  at (0.3, -0.5) {};
      \node (1_1)  at (0.5, -1.85) {};
      \node (5_3)  at (-0.3, -0.5) {};
      \node (8_2)  at (-0.8, -1.0) {};
      \node (7_2)  at (-1.5, -1.9) {};
      \node (6_1)  at (-2.0, -3.8) {};
      \node (7_3)  at (-2.75, -2.3) {};
      \node (e)  at (-3.2, -2.2) {end};
      
      \draw[blue] plot [smooth] coordinates {(7_1) (8_1) (5_1) 
      (2_1) (3_1) (4_1) (2.3, -4.2) (1.8, -3.0) (3_2) (2_2) 
      (5_2) (1_1) (0.0, -2.25) (-0.5, -1.85) (5_3) (8_2) (7_2) (6_1) 
      (-2.9, -4.1) (-3.1, -3.2) (7_3)}  
      [arrow inside={end=stealth,opt={red,scale=2}}
      {0.05, 0.35,0.50,0.80, 0.97}];

      \draw [fill=black] (7_1) circle (1pt);
      \draw [fill=black] (7_3) circle (1pt);
      
      \path[-] (2) edge node[right] {0} (3);
      \path[-] (3) edge node[right] {1} (4);
      \path[-] (2) edge node[above] {0} (5);
      \path[-] (1) edge node[left] {0} (5);
      
      \path[-] (7) edge node[left] {1} (6);
      \path[-] (7) edge node[left] {1} (8);
      \path[-] (8) edge node[above] {0} (5);
    \end{tikzpicture}
    }
    \caption{Euler Tour}
    \label{fig:et_example}
  \end{subfigure}  
  \begin{subfigure} [b]{0.49\columnwidth}
    \scalebox{0.4} {
      \begin{tikzpicture}[roundnode/.style={circle, draw=green!60, fill=green!5, very thick, minimum size=7mm}]
        \node[roundnode] (2_2)  at (0, 0)  {\textcolor{red}{$2_2$}};
        
        \node[roundnode] (2_1)  at (-2.0, -1.0)  {$2_1$};
        \node[roundnode] (8_2)  at (2.0, -1.0)  {$8_2$};
        
        \path[-] (2_2) edge  (2_1);
        \path[-] (2_2) edge  (8_2);
        
        \node[roundnode] (8_1)  at (-3.0, -2.0)  {$8_1$};
        \node[roundnode] (4_1)  at (-1.0, -2.0)  {$4_1$};
        \node[roundnode] (1_1)  at (1.0, -2.0)  {$1_1$};
        \node[roundnode] (6_1)  at (3.0, -2.0)  {$6_1$};
        
        \path[-] (2_1) edge  (8_1);
        \path[-] (2_1) edge  (4_1);
        \path[-] (8_2) edge  (1_1);
        \path[-] (8_2) edge  (6_1);
        
        \node[roundnode] (7_1)  at (-3.5, -3.25)  {$7_1$};
        \node[roundnode] (5_1)  at (-2.5, -3.25)  {$5_1$};
        \node[roundnode] (3_1)  at (-1.5, -3.25)  {$3_1$};
        \node[roundnode] (3_2)  at (-0.5, -3.25)  {$3_2$};
        \node[roundnode] (5_2)  at (0.5, -3.25)  {$5_2$};
        \node[roundnode] (5_3)  at (1.5, -3.25)  {$5_3$};
        \node[roundnode] (7_2)  at (2.5, -3.25)  {$7_2$};
        \node[roundnode] (7_3)  at (3.5, -3.25)  {$7_3$};
        
        \path[-] (8_1) edge  (7_1);
        \path[-] (8_1) edge  (5_1);
        \path[-] (4_1) edge  (3_1);
        \path[-] (4_1) edge  (3_2);
        \path[-] (1_1) edge  (5_2);
        \path[-] (1_1) edge  (5_3);
        \path[-] (6_1) edge  (7_2);
        \path[-] (6_1) edge  (7_3);    
      \end{tikzpicture}
    }
    \caption{ET-tree}
  \end{subfigure}
  
  \caption{An Euler Tour of a level-0 cumulative 
  spanning tree $_c\mathcal{ST}_0$ 
    in Figure~\ref{fig:exam_cst} starting at node 7 is 
  $7_1$, $8_1$, $5_1$, $2_1$, $3_1$, $4_1$, $3_2$, $2_2$, 
  $5_2$, $1_1$, $5_3$, $8_2$, $7_3$, $6_1$, $7_3$ 
  (first visiting node 7 and then 
  nodes 8, 5, 2, 3, 4, 3 and so on). The ET-tree 
  is shown on the right hand side.}
  \label{fig:exam_et}
\end{figure}

\begin{example}
  In Figure~\ref{fig:exam_et}, the spanning tree has 8 nodes and the
  ET-tree has 15 nodes. In the ET-tree, the edge between $2_2$ and
  $8_2$ does not exist in the spanning tree. The edge $(7, 8)$ in the
  spanning tree is mapped to four tree nodes with keys $7_1$, $8_1$,
  $8_2$, $7_2$.
\end{example}

Deleting a level-$i$ tree edge $(u, v)$ splits the spanning tree
$st_i^{(u, v)}$ into $st_i^{(u)}$ and $st_i^{(v)}$ and hence splits
the Euler Tour of $st_i^{(u, v)}$. Let $st_i^{(u)}$ be the smaller
tree.  We traverse the nodes in the ET-tree of $st_i^{(u)}$ to find a
replacement edge.  If there are multiple replacement edges, any of
them can be selected. Assume non-tree edge $(tx, ty)$ reconnects
$tx \in st_i^{(u)}$ and $ty \in st_i^{(v)}$.  The ET-tree for
$st_i^{(u)}$ and the ET-tree for $st_i^{(v)}$ are merged through
$(tx, ty)$, which means $(tx, ty)$ must be inserted as a tree edge.
The algorithm to merge two ET-trees through $(tx, ty)$ requires that
the Euler tours of $st_i^{(u)}$ and $st_i^{(v)}$ start at node $tx$
and node $ty$, respectively.  If this condition is not satisfied, we
restructure the ET-trees before merging them. 

\begin{figure}[htb!] 
  \begin{subfigure}[b]{0.3\columnwidth}\centering
    \scalebox{0.35}{
      \begin{tikzpicture}
      \node[roundnode] (5)  at (0, 0)  {\textcolor{red}{5}};
      \node[roundnode] (8)  at (-1.3, -0.8) {8};
      \node[roundnode] (1)  at (0.0, -1.75)  {1};
      \node[roundnode] (2)  at (1.3, -0.8)  {2};
      
      \node[roundnode] (7)  at (-1.9, -1.9) {7};
      \node[roundnode] (3)  at (2.0, -2.0)  {3};
      
      \node[roundnode] (6)  at (-2.3, -3.0) {6};
      \node[roundnode] (4)  at (2.4, -3.3) {4};

      \node (7_1)  at (-2.4, -1.7) {};
      \node (s)  at (-2.9, -1.5) {start};
      \node (8_1)  at (-1.8, -0.6) {};
      \node (5_1)  at (0.0, 0.5) {};
      \node (2_1)  at (1.8, -0.6) {};
      \node (3_1)  at (2.5, -1.8)  {};
      \node (4_1)  at (2.9, -3.1) {};
      \node (3_2)  at (1.5, -2.0)  {};
      \node (2_2)  at (0.8, -1.0)  {};
      \node (5_2)  at (0.3, -0.5) {};
      \node (1_1)  at (0.5, -1.85) {};
      \node (5_3)  at (-0.3, -0.5) {};
      \node (8_2)  at (-0.9, -1.0) {};
      \node (7_2)  at (-1.4, -1.9) {};
      \node (6_1)  at (-1.8, -3.2) {};
      \node (7_3)  at (-2.5, -2.0) {};
      \node (e)  at (-3.0, -2.0) {end};
      
      \draw[blue] plot [smooth]
      coordinates {(7_1) (8_1) (5_1) (2_1)  (1.8, -1.1) (1.4, -1.4) (2_2)
      (5_2) (1_1) (0.0, -2.25) (-0.5, -1.85) (5_3) (8_2) (7_2) (6_1)  (-2.4, -3.5) 
      (-2.8, -3.2) (7_3)}
      [arrow inside={end=stealth,opt={red,scale=2}}{0.05, 0.35,0.55,0.70, 0.95}];

      \draw [fill=black] (7_1) circle (1pt);
      \draw [fill=black] (7_3) circle (1pt);
      
      \path[-] (3) edge node[right] {1} (4);
      \path[-] (2) edge node[above] {0} (5);
      \path[-] (1) edge node[left] {0} (5);
      
      \path[-] (7) edge node[right] {1} (6);
      \path[-] (7) edge node[right] {1} (8);
      \path[-] (8) edge node[above] {0} (5);

      \path[-, dashed] (1) edge node[right] {} (3);
      \path[-, dashed] (3) edge node[right] {} (6);
      
      \draw[blue] plot [smooth] coordinates {(3_1) (4_1) (2.8, -3.6)  (2.3, -3.7) 
      (1.8, -3.2) (3_2) (2.0, -1.5) (3_1)}
      [arrow inside={end=stealth,opt={red,scale=2}}{0.25}];
    \end{tikzpicture}
    }
    \caption*{delete (2, 3)}
  \end{subfigure}
  \begin{subfigure}[b]{0.33\columnwidth}
    \scalebox{0.35}{
    \begin{tikzpicture}
      \node[roundnode] (5)  at (0, 0)  {\textcolor{red}{5}};
      \node[roundnode] (8)  at (-1.3, -0.8) {8};
      \node[roundnode] (1)  at (0.0, -1.75)  {1};
      \node[roundnode] (2)  at (1.3, -0.8)  {2};
      
      \node[roundnode] (7)  at (-1.9, -1.9) {7};
      \node[roundnode] (3)  at (2.0, -2.0)  {3};
      
      \node[roundnode] (6)  at (-2.3, -3.0) {6};
      \node[roundnode] (4)  at (2.4, -3.3) {4};
      
      \node (6_1)  at (-2.75, -2.8) {};
      \node (7_1)  at (-2.4, -1.7) {};
      \node (8_1)  at (-1.8, -0.6) {};
      \node (5_1)  at (0.0, 0.5) {};
      \node (2_1)  at (1.8, -0.6) {};
      \node (3_1)  at (2.5, -1.8)  {};
      \node (4_1)  at (2.9, -3.1) {};
      \node (3_2)  at (1.5, -2.0)  {};
      \node (2_2)  at (0.8, -1.0)  {};
      \node (5_2)  at (0.3, -0.5) {};
      \node (1_1)  at (0.5, -1.85) {};
      \node (5_3)  at (-0.3, -0.5) {};
      \node (8_2)  at (-0.85, -1.0) {};
      \node (7_2)  at (-1.4, -1.9) {};
      \node (6_2)  at (-1.8, -3.2) {};
      
      \draw[blue] plot [smooth]
      coordinates {(6_1) (7_1) (8_1) (5_1) (2_1)  (1.8, -1.1) (1.4, -1.4) (2_2)
      (5_2) (1_1) (0.0, -2.25) (-0.5, -1.85) (5_3) (8_2) (7_2) (6_2)  (-2.4, -3.5) 
      (-2.8, -3.1) }
      [arrow inside={end=stealth,opt={red,scale=2}}{0.05, 0.35,0.55,0.70, 0.95}];

      \draw [fill=black] (6_1) circle (1pt);
      \node (s)  at (-3.2, -2.5) {start};
      
      \draw [fill=black] (-2.8, -3.1) circle (1pt);
      \node (e)  at (-3.25, -3.2) {end};
      
      \path[-] (3) edge node[right] {1} (4);
      \path[-] (2) edge node[above] {0} (5);
      \path[-] (1) edge node[left] {0} (5);
      
      \path[-] (7) edge node[right] {1} (6);
      \path[-] (7) edge node[right] {1} (8);
      \path[-] (8) edge node[above] {0} (5);

      \path[-] (3) edge node[right] {} (6);
      
      \draw[blue] plot [smooth] coordinates {(3_1) (4_1) (2.8, -3.6)  (2.3, -3.7) 
      (1.8, -3.2) (3_2) (2.0, -1.5) (3_1)}
      [arrow inside={end=stealth,opt={red,scale=2}}{0.25}];
    \end{tikzpicture}
    }
    \caption*{restructure}
  \end{subfigure}
  \begin{subfigure}[b]{0.33\columnwidth}
    \scalebox{0.35}{
    \begin{tikzpicture}[roundnode/.style={circle, draw=green!60, fill=green!5, very thick, minimum size=4mm}]
      \node[roundnode] (5)  at (0, 0)  {\textcolor{red}{5}};
      \node[roundnode] (8)  at (-1.5, -1.0) {8};
      \node[roundnode] (1)  at (0.0, -1.75)  {1};
      \node[roundnode] (2)  at (1.5, -1.0)  {2};
      
      \node[roundnode] (7)  at (-2.2, -2.0) {7};
      \node[roundnode] (6)  at (-2.5, -3.0) {6};
      \node[roundnode] (3)  at (-1.1, -3.0)  {3};
      \node[roundnode] (4)  at (0.3, -3.0) {4};
      
      \node (7_1)  at (-2.5, -1.7) {};
      
      \node (8_1)  at (-1.8, -0.7) {};
      \node (5_1)  at (0.0, 0.5) {};
      \node (2_1)  at (1.8, -0.7) {};
      \node (6_1)  at (-2.95, -3.3) {};
      \node (4_1)  at (0.3, -2.5) {};
      
      \node (2_2)  at (1.0, -1.0)  {};
      \node (5_2)  at (0.3, -0.5) {};
      \node (1_1)  at (0.5, -1.85) {};
      \node (5_3)  at (-0.3, -0.5) {};
      \node (8_2)  at (-1.0, -1.0) {};
      \node (7_2)  at (-1.6, -1.9) {};
      \node (6_2)  at (-2.0, -2.5) {};
      
      \node (3_1)  at (-1.1, -2.5)  {};
      \node (3_2)  at (-1.5, -3.5)  {}; 
      \node (7_3)  at (-2.4, -2.0) {};
      \node (s)  at (-0.6, -2.3) {start};
      \draw [fill=black] (3_1) circle (1pt);
      
      \node (e)  at (-1.6, -2.3) {end};
      
      \draw [fill=black] (-1.3, -2.5) circle (1pt);
      
      \draw[blue] plot [smooth]
      coordinates { (3_1)
      (4_1) (0.7, -2.9) (0.4, -3.4) (3_2) (6_1) (7_1) 
      (8_1) (5_1) (2_1)  (2.0, -1.2) 
      (1.5, -1.5) (2_2) (5_2) (1_1) (0.0, -2.25) (-0.5, -1.85) 
      (5_3) (8_2) (7_2) (6_2) (-1.3, -2.5)}
      [arrow inside={end=stealth,opt={red,scale=2}}{0.05, 0.35, 
      0.55,0.70, 0.95}];

      \path[-] (3) edge node[above] {1} (4);
      \path[-] (2) edge node[right] {0} (5);
      \path[-] (1) edge node[left] {0} (5);
      
      \path[-] (7) edge node[left] {1} (6);
      \path[-] (7) edge node[left] {1} (8);
      \path[-] (8) edge node[left] {0} (5);
      
      \path[-] (3) edge node[above] {0} (6);
    \end{tikzpicture}
    }
    \caption*{Insert (3, 6)}
  \end{subfigure}

  \begin{subfigure}[b]{0.33\columnwidth}
    \scalebox{0.35}{
      \begin{tikzpicture}[roundnode/.style={circle, draw=green!60, fill=green!5, very thick, minimum size=7mm}]
        \node[roundnode] (2_1)  at (0.5, 0)  {$2_1$};
        \node[roundnode] (8_2)  at (2.0, -1.0)  {$8_2$};
        \path[-] (2_1) edge  (8_2);
        
        \node[roundnode] (8_1)  at (-1.0, -1.0)  {$8_1$};

        \node[roundnode] (1_1)  at (1.0, -2.0)  {$1_1$};
        \node[roundnode] (6_1)  at (3.0, -2.0)  {$6_1$};
        
        \path[-] (8_2) edge  (1_1);
        \path[-] (8_2) edge  (6_1);
        \path[-] (2_1) edge  (8_1);
        
        \node[roundnode] (7_1)  at (-1.5, -2.0)  {$7_1$};
        \node[roundnode] (5_1)  at (-0.5, -2.0)  {$5_1$};

        \node[roundnode] (5_2)  at (0.5, -3.25)  {$5_2$};
        \node[roundnode] (5_3)  at (1.5, -3.25)  {$5_3$};
        \node[roundnode] (7_2)  at (2.5, -3.25)  {$7_2$};
        \node[roundnode] (7_3)  at (3.5, -3.25)  {$7_3$};
        
        \path[-] (8_1) edge  (7_1);
        \path[-] (8_1) edge  (5_1);

        \path[-] (1_1) edge  (5_2);
        \path[-] (1_1) edge  (5_3);
        \path[-] (6_1) edge  (7_2);
        \path[-] (6_1) edge  (7_3);    
        
        \node[roundnode] (4_1)  at (5.0, -2.0)  {$4_1$};
        \node[roundnode] (3_1)  at (4.5, -3.25)  {$3_1$};
        \node[roundnode] (3_2)  at (5.5, -3.25)  {$3_2$};
        
        \path[-] (4_1) edge  (3_1);
        \path[-] (4_1) edge  (3_2);
      \end{tikzpicture}
    }
    \caption*{break ET-tree}
  \end{subfigure}
  \begin{subfigure}[b]{0.33\columnwidth}
    \scalebox{0.30}{
    \begin{tikzpicture}[roundnode/.style={circle, draw=green!60, fill=green!5, very thick, minimum size=7mm}]
      \node[roundnode] (2_1)  at (0.5, 0)  {$2_1$};
      \node[roundnode] (8_2)  at (2.0, -1.0)  {$8_2$};
      \path[-] (2_1) edge  (8_2);
      
      \node[roundnode] (8_1)  at (-1.0, -1.0)  {$8_1$};
      \node[roundnode] (1_1)  at (1.0, -2.0)  {$1_1$};
      \node[roundnode] (7_2)  at (3.0, -2.0)  {$7_2$};
      
      \path[-] (8_2) edge  (1_1);
      \path[-] (8_2) edge  (7_2);
      \path[-] (2_1) edge  (8_1);
      
      \node[roundnode] (6_1)  at (-1.5, -2.0)  {$6_1$};
      \node[roundnode] (5_1)  at (-0.5, -2.0)  {$5_1$};
      \node[roundnode] (5_2)  at (0.5, -3.25)  {$5_2$};
      \node[roundnode] (5_3)  at (1.5, -3.25)  {$5_3$};
      \node[roundnode] (7_1)  at (-1.05, -3.25)  {$7_1$};
      \node[roundnode] (6_2)  at (3.5, -3.25)  {$6_2$};
      
      \path[-] (8_1) edge  (6_1);
      \path[-] (8_1) edge  (5_1);
      \path[-] (6_1) edge  (7_1);
      \path[-] (6_2) edge  (7_2);
      
      \path[-] (1_1) edge  (5_2);
      \path[-] (1_1) edge  (5_3);
      
      \node[roundnode] (4_1)  at (5.5, -2.0)  {$4_1$};
      \node[roundnode] (3_1)  at (5.0, -3.25)  {$3_1$};
      \node[roundnode] (3_2)  at (6.0, -3.25)  {$3_2$};
      
      \path[-] (4_1) edge  (3_1);
      \path[-] (4_1) edge  (3_2);
    \end{tikzpicture}
    }
    \caption*{restructure}
  \end{subfigure}
  \begin{subfigure}[b]{0.3\columnwidth}
    \scalebox{0.30}{
      \begin{tikzpicture}[roundnode/.style={circle, draw=green!60, fill=green!5, very thick, minimum size=7mm}]
        \node[roundnode] (2_1)  at (0.5, 0)  {$2_1$};
        \node[roundnode] (8_2)  at (2.0, -1.0)  {$8_2$};
        \path[-] (2_1) edge  (8_2);
        
        \node[roundnode] (8_1)  at (-1.0, -1.0)  {$8_1$};
        \node[roundnode] (1_1)  at (1.0, -2.0)  {$1_1$};
        \node[roundnode] (9_2)  at (3.0, -2.0)  {$7_2$};
         \node[roundnode] (4_1)  at (-1.75, -2.0)  {$4_1$};
        
        \path[-] (8_2) edge  (1_1);
        \path[-] (8_2) edge  (7_2);
        \path[-] (2_1) edge  (8_1);
        \path[-] (8_1) edge  (4_1);
        
        \node[roundnode] (3_1)  at (-2.25, -3.25)  {$3_1$};
        \node[roundnode] (6_1)  at (-1.25, -3.25)  {$6_1$};
        \node[roundnode] (5_1)  at (-0.25, -2.0)  {$5_1$};
        \node[roundnode] (5_2)  at (0.5, -3.25)  {$5_2$};
        \node[roundnode] (5_3)  at (1.5, -3.25)  {$5_3$};
        
        \node[roundnode] (6_2)  at (3.5, -3.25)  {$6_2$};
        \node[roundnode] (3_2)  at (-1.75, -4.5)  {$3_2$};
        \node[roundnode] (7_1)  at (-0.75, -4.5)  {$7_1$};
        \node[roundnode] (3_3)  at (4.0, -4.5)  {$3_3$};
        
        \path[-] (8_1) edge  (5_1);
        \path[-] (6_1) edge  (3_2);
        \path[-] (6_1) edge  (7_1);
        \path[-] (6_2) edge  (7_2);
        \path[-] (1_1) edge  (5_2);
        \path[-] (1_1) edge  (5_3);
        \path[-] (4_1) edge  (3_1);
        \path[-] (4_1) edge  (6_1);
        \path[-] (6_2) edge  (3_3);
    \end{tikzpicture}
    }
    \caption*{merge ET-trees}
  \end{subfigure}
  \caption{Delete the tree edge (2, 3). Select the non-tree
  edge (3, 6) as a replacement edge. Insert (3, 6) 
  as a tree edge.}
  \label{fig:exam_et_delete}
\end{figure}
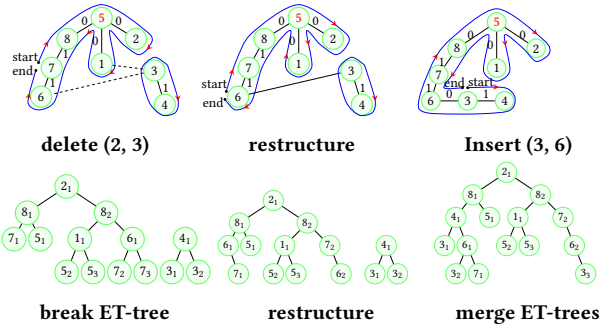

\begin{example}
  Figure~\ref{fig:exam_et_delete} shows the maintenance of ET-tree
  when tree edge (2, 3) is deleted. The smaller ET-tree for
  $st_1^{(3)}$ is traversed.  The non-tree edges (1, 3) and (3, 6) are
  possible replacement edges.  We insert (3, 6) as a tree edge.
  Details of breaking, restructuring and merging ET-trees can be found
  in ~\cite{henzinger1999randomized, David_HKvariant}.
\end{example}


\subsection{Structural Trees and Local Trees}

Structural trees and local trees have bounded heights, which offers
amortized cost guarantees for connectivity queries.  We start out with
the structural tree, which is the baseline model for local
trees~\cite{thorup2000near, wulff2013faster}. 

\subsubsection{Structural Tree}
\label{sec:st}

In a structural tree, nodes of level-$i$ recursive spanning trees are
directly connected to a level-$i$ super node $s_i$ of a recursive
spanning tree.

\begin{definition}[Level-i Structural Tree]
  Given a level-$i$ recursive spanning tree $_r\mathcal{ST}_i$ $=$
  $(_r\mathcal{V}_i,~ E_i)$ and the level-$i$ super node $s_i$ for
  $_r\mathcal{ST}_i$, the level-$i$ structural tree is $T_i$ $=$
  $(V^{T_i}, E^{T_i})$ in which $V^{T^i}$ $=$ $_r\mathcal{V}_i$ $\cup$
  $\{s_i\}$ and $s_i$ is the root. All nodes in $_r\mathcal{V}^i$ are
  directly connected to $s_i$ in $T_i$, $i.e.,$ $E^{T_i}$ $=$
  $\{(s_i, v)~|~v \in~_r\mathcal{V}_i\}$. 
\end{definition}

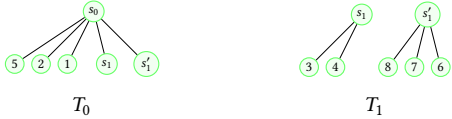
\begin{figure}[htb!]
  \begin{subfigure}{0.45\columnwidth} \centering
    \scalebox{.35}{
      \begin{tikzpicture}[roundnode/.style={circle, draw=green!60, fill=green!5, very thick, minimum size=6mm}]
      \node[roundnode] (r)  at (0.0, 1.0)  {$s_0$};

      \node[roundnode] (5)  at (-3, -1.0)  {5};
      
      \node[roundnode] (2)  at (-2.0, -1.0)  {2};
      
      \node[roundnode] (1)  at (-1.0, -1.0)  {1};
          
      

      \node[roundnode] (c2)  at (0.5, -1.0)  {$s_1$};
      \node[roundnode] (c1)  at (2.0, -1.0)  {$s_1'$};
      \draw[-] (r) to (5);
      \draw[-] (r) to (1);
      \draw[-] (r) to (2);
      \draw[-] (r) to (c1);
      \draw[-] (r) to (c2);
      
      
      
    \end{tikzpicture}
    }
    \caption*{$T_0$}
  \end{subfigure}
  \begin{subfigure}{0.45\columnwidth}\centering
    \scalebox{0.35}{
      \begin{tikzpicture}[roundnode/.style={circle, draw=green!60, fill=green!5, very thick, minimum size=6mm}]
          
      \node[roundnode] (3)  at (-2.5, -3.0)  {3};
      
      \node[roundnode] (4)  at (-1.5, -3.0)  {4};
      
      \node[roundnode] (c2)  at (-0.5, -1.0)  {$s_1$};
      \node[roundnode] (c1)  at (2.0, -1.0)  {$s_1'$};

      \draw[-] (c2) to (3);
      \draw[-] (c2) to (4);
      
      \node[roundnode] (8)  at (0.5, -3.0)  {8};
      \node[roundnode] (7)  at (1.5, -3.0)  {7};
      \node[roundnode] (6)  at (2.5, -3.0)  {6};
      
      \draw[-] (c1) to (8);
      \draw[-] (c1) to (7);
      \draw[-] (c1) to (6);
      
    \end{tikzpicture}
    }
    \caption*{$T_1$}
  \end{subfigure}
  \caption{Structural trees for recursive spanning trees
  of Figure~\ref{fig:exam_rst}.}
  \label{fig:exam_st}
\end{figure}
\begin{example}
  Figure~\ref{fig:exam_st} shows the structural tree $T_0$ 
  for the level-0 spanning tree and the structural tree $T_1$ 
  for the level-1 spanning trees in Figure~\ref{fig:exam_rst}.
\end{example}
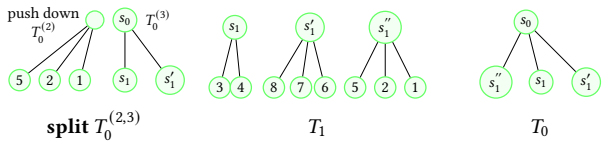
\begin{figure}[htb!]
  \begin{subfigure}{0.30\columnwidth} \centering
    \scalebox{.40}{
      \begin{tikzpicture}[roundnode/.style={circle, draw=green!60, fill=green!5, very thick, minimum size=6mm}]
      \node[roundnode] (r)  at (0.5, 1.0)  {$s_0$};
      \node[roundnode] (r1)  at (-0.5, 1.0)  {};

      \node  at (-2.2, 1.2)  {push down};
      \node  at (-2.2, 0.6)  {$T_0^{(2)}$};
      \node  at (1.6, 1.0)  {$T_0^{(3)}$};

      \node[roundnode] (5)  at (-3, -1.0)  {5};
      \node[roundnode] (2)  at (-2.0, -1.0)  {2};
      
      \node[roundnode] (1)  at (-1.0, -1.0)  {1};

      \node[roundnode] (c2)  at (0.5, -1.0)  {$s_1$};
      \node[roundnode] (c1)  at (2.0, -1.0)  {$s_1'$};
      \draw[-] (r1) to (5);
      \draw[-] (r1) to (1);
      \draw[-] (r1) to (2);
      \draw[-] (r) to (c1);
      \draw[-] (r) to (c2);      
    \end{tikzpicture}
    }
    \caption*{split $T_0^{(2, 3)}$}
  \end{subfigure}
  \begin{subfigure}{0.38\columnwidth}\centering
    \scalebox{.40}{
      \begin{tikzpicture}[roundnode/.style={circle, draw=green!60, fill=green!5, very thick, minimum size=6mm}]
          
      \node[roundnode] (3)  at (-1.5, -1.0)  {3};
      
      \node[roundnode] (4)  at (-0.8, -1.0)  {4};
      
      \node[roundnode] (c2)  at (-1.0, 1.0)  {$s_1$};
      \node[roundnode] (c1)  at (1.5, 1.0)  {$s_1'$};

      \draw[-] (c2) to (3);
      \draw[-] (c2) to (4);
      
      \node[roundnode] (8)  at (0.3, -1.0)  {8};
      \node[roundnode] (7)  at (1.2, -1.0)  {7};
      \node[roundnode] (6)  at (2.0, -1.0)  {6};
      
      \draw[-] (c1) to (8);
      \draw[-] (c1) to (7);
      \draw[-] (c1) to (6);
   
      \node[roundnode] (r1)  at (4.0, 1.0)  {$s_1^{''}$};
      \node[roundnode] (5)  at (3.0, -1.0)  {5};
      \node[roundnode] (2)  at (4.0, -1.0)  {2};
      
      \node[roundnode] (1)  at (5.0, -1.0)  {1};
      \draw[-] (r1) to (5);
      \draw[-] (r1) to (1);
      \draw[-] (r1) to (2);
    \end{tikzpicture}
    }
    \caption*{$T_1$}
  \end{subfigure}
  \begin{subfigure}{0.30\columnwidth}\centering
    \scalebox{.40}{
      \begin{tikzpicture}[roundnode/.style={circle, draw=green!60, fill=green!5, very thick, minimum size=6mm}]
      \node[roundnode] (r)  at (0.0, 1.0)  {$s_0$};
      \node[roundnode] (c3)  at (-1.0, -1.0)  {$s_1^{''}$};
      \node[roundnode] (c2)  at (0.5, -1.0)  {$s_1$};
      \node[roundnode] (c1)  at (2.0, -1.0)  {$s_1'$};
      \draw[-] (r) to (c3);
      \draw[-] (r) to (c1);
      \draw[-] (r) to (c2);

    \end{tikzpicture}
    }
    \caption*{$T_0$}
  \end{subfigure}
  \caption{Deleting the tree edge (2, 3) in the structural tree.}
  \label{fig:exam_st_de}
\end{figure}

Inserting a tree edge $(u, v)$ connects structural trees $T^{(u)}_0$
and $T^{(v)}_0$. $W.l.o.g$, assume that $T^{(u)}_0$ contains fewer
leaf nodes (super nodes are not leaf nodes).  Nodes of $T^{(u)}_0$
except the root directly connect to the root of $T^{(v)}_0$. The
performance of the insertion is determined by the number of nodes in
$T^{(u)}_0$.  The worst-case performance for inserting a tree edge is
$\Theta(n/2)$ if $T^{(u)}_0$ contains $n/2$ nodes.

When a level-$i$ tree edge $(u, v)$ is deleted, the level-$i$
structural tree $T^{(u, v)}_i$ that contains $u$ and $v$ splits into
two level-$i$ structural trees: $T^{(u)}_i$ and $T^{(v)}_i$.  
Let $T^{(u)}_i$ be the tree with
fewer leaf nodes. We traverse the leaf nodes in $T^{(u)}_i$ to find a
non-tree edge that reconnects $T^{(u)}_i$ and $T^{(v)}_i$.  When the
traversal stops, the levels of all traversed tree edges in $T^{(u)}_i$
and visited level-$i$ non-tree edges is increased by 1. Thus, the
level-$i$ structural tree $T^{(u)}_i$ becomes a level-$(i+1)$
structural tree.  The heuristic of pushing down the smaller tree
guarantees a $\log n$ height of a structural tree \cite{HDT} as there
are at most $\log n$ levels. A level-1 super node contains at most
$n/2$ leaf nodes that are all from the small tree. A level-2 super
node contains at most $n/4$ leaf nodes and so on.  
 The lower bound for the amortized costs for deleting tree
edges in structural trees is $O((\log n)^2)$ 
 \cite{wulff2013faster}. The worst-case performance of deleting a
tree edge is $O(|E|)$ when all tree edges and non-tree edges have to
be traversed and pushed to the next level.

\begin{example}
  Figure~\ref{fig:exam_st_de} illustrates the split of $T_0^{(2, 3)}$
  when the level-0 tree edge (2, 3) is deleted.  The structural tree
  $T_0^{(2)}$ containing node 2 has three leaf nodes while $T_0^{(3)}$
  containing node 3 has five leaf nodes, two leaf nodes in $s_1$ and
  three leaf nodes in $s_1'$. Nodes in $T_0^{(2)}$ are pushed down,
  and $T_0^{(2)}$ becomes $T_1^{(2)}$. We traverse leaf nodes of
  $T_1^{(2)}$ and find that level-0 non-tree edges (1, 3) and (3, 6)
  are replacement edges. $T_0^{(2)}$ with root $s_1{''}$ connects to
  the root of $T_0^{(3)}$.
\end{example}

All leaf nodes of the smaller tree have to be 
traversed when searching for a replacement edge,
even though they may not be connected via non-tree edges. 
In the worst case, if none of the leaf nodes have non-tree 
neighbors, we traverse all of them unnecessarily.
The local tree, which we look at next, transforms the 
structural tree to a 
height bounded binary tree with auxiliary bit arrays
to prune leaf nodes without non-tree neighbors. 

\subsubsection{Local Tree}
\label{sec:lt}
We first show how the local 
tree\footnote{In the original paper~\cite{thorup2000near}, 
local tree was called local search tree. But there is actually 
no search in the local tree, which is also confirmed by a 
recent work~\cite{huang2023fully}.} achieves and maintains
a bounded height and then 
show that the runtime of finding a non-tree edge 
in the local tree is bounded by the height.

\textbf{Bounded tree height.} A level-$i$ local tree 
achieves a bounded height by maintaining a bounded number 
($\leq \log n$)  of \emph{rank trees}, $i.e.,$
binary trees with bounded heights. 
Given a node $x$ of a local tree, 
we use $x.nl$ to denote the number of leaf nodes in the tree 
rooted at $x$. The \emph{rank} of a node $x$, 
$rank(x)$, is equal to 
$\lfloor \log_2 (x.nl) \rfloor$~\cite{thorup2000near}. 
Two nodes with the same rank are paired up and form a new
rank tree. Assume that nodes $x$ and $y$ have the same 
rank. $W.l.o.g.,$ $x.nl$ $\leq$ $y.nl$. 
We pair up $x$ and $y$, 
making  $x$ and $y$ the left child and the right 
child, respectively, of a new node $par$. 
The binary tree rooted at $par$ is called a rank tree
and the node $par$ is called a rank 
root (the root of a rank tree). 
If $par$ is paired with another node, $par$ is not a rank
root any more. One node of a local tree can be paired 
up only once. When no further pair operations can be done, 
at most $\log n$ rank trees with unique ranks 
remain. The height of each rank tree is $O(\log n)$ . 
Finally, the local trees connects all rank roots with a path
such that rank roots with larger ranks are closer to the root
node of the local tree. 

\textbf{Leverage bitmaps to search for a non-tree edge.}
Each node $x$ of the local tree has an array of $l_{max}$ 
bits, called bitmap, in which the $i$-th bit tells us if 
there exists a leaf node of the subtree rooted at $x$ that 
has a level-$i$ non-tree neighbor. 
The search for a level-$i$ non-tree edge traverses from the 
root to leaf nodes and is done by checking if 
bitmap[$i$] = 1 for the nodes on the path. This prunes the 
nodes that do not have non-tree neighbors.
The runtime of finding a non-tree edge is bounded by the
tree height. 
After we find a non-tree edge, we check
if the non-tree edge is a replacement edge. 

\begin{definition}[Level-i Local Tree]
Given a level-$i$ structural tree $T_i$ $=$ 
$(V^{T^i}, E^{T^i})$ with $s_i$ being the root,
a level-$i$ local tree for $T_i$ is $L_i$ $=$ 
$(V^{L^i}, E^{L^i})$ in which $V^{T^i}$ $\subseteq$ 
$V^{L^i}$ and $s_i$ is the root node. 
The distance between a node $x$ and the root $s_i$ 
in $L^i$ is $O(1 + \log(s_i.nl/x.nl))$ (for details, 
see~\cite{thorup2000near}).
\end{definition}

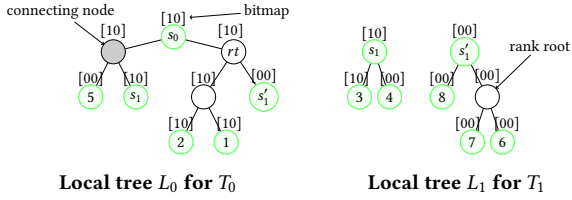
\begin{figure}[htb!]
  \begin{subfigure}{0.50\columnwidth}\centering
    \scalebox{0.40}{
    \begin{tikzpicture}

      \node  at (-1.5, 0.6)  {[00]};
      \node[roundnode] (5)  at (-1.5, 0.0)  {$5$};
      
      \node  at (0.0, 0.6)  {[10]};
      \node[roundnode] (s1)  at (0.0, 0.0)  {$s_1$};
 
      \node  at (-0.75, 2.2)  {[10]};
      \node[circle, fill=gray!40, draw, minimum size=8mm] (p1)  at (-0.75, 1.5)  {};
      \node (connecting) at (-2.5, 2.85) {connecting node};
      \draw[->] (connecting) to (p1);
      
      \node  at (2.25, 0.6)  {[10]};
      \node[circle, draw, minimum size=8mm] (l1)  at (2.25, 0.0)  {};
      
      \node  at (1.5, -0.9)  {[10]};
      \node[roundnode] (2)  at (1.5, -1.5)  {$2$};
       
      \node  at (3.1, -0.9)  {[10]}; 
      \node[roundnode] (1)  at (3.0, -1.5)  {$1$};
      
      \draw[-] (p1) to (5);
      \draw[-] (l1) to (1);
      \draw[-] (l1) to (2);
      
      \node  at (3.15, 2.2)  {[10]};
      \node[circle, draw, minimum size=8mm] (l2)  at (3.25, 1.5)  {$rt$};
      
      \draw[-] (p1) to (s1);
      
      \node () at (4.25, 0.7) {[00]};
      \node[roundnode] (s1p)  at (4.25, 0.0)  {$s_1'$};
      
      \draw[-] (l2) to (l1);
      \draw[-] (l2) to (s1p);  
      
      \node (bt_label) at (4.35, 2.85) {bitmap};
      \node (bt_value) at (1.25, 2.6)  {[10]};
      \node[roundnode] (p2)  at (1.25, 2.0)  {$s_0$};
      \draw[->] (bt_label) to (bt_value);
      
      \draw[-] (p2) to (p1);
      \draw[-] (p2) to (l2);


    \end{tikzpicture}
  }
  \caption*{Local tree $L_0$ for $T_0$}
  \end{subfigure}
  \begin{subfigure}[b]{0.45\columnwidth}\centering
    \scalebox{0.40}{
    \begin{tikzpicture}[roundnode/.style={circle, draw=green!60, 
    fill=green!5, very thick, minimum size=8mm}]
      \node ()  at (1.5, 0.6)  {[10]};
      \node[roundnode] (s1)  at (1.5, 0.0)  {$s_1$};

      \node ()  at (0.9, -0.9)  {[10]};
      \node[roundnode] (3)  at (1.0, -1.5)  {$3$};

      \node ()  at (2.1, -0.9)  {[00]};
      \node[roundnode] (4)  at (2.0, -1.5)  {$4$};
      
      \draw[-] (s1) to (3);
      \draw[-] (s1) to (4);

      \node[] ()  at (4.5, 0.7)  {[00]};
      \node[roundnode] (s1p)  at (4.5, 0.0)  {$s_1'$};

      \node[] ()  at (3.75, -0.9)  {[00]};
      \node[roundnode] (8)  at (3.75, -1.5)  {$8$};

      \node[] ()  at (5.25, -0.8)  {[00]};
      \node[circle, draw, minimum size=8mm] (rs1)  at (5.25, -1.5)  {};
      \node (rt_label)  at (7.0, 0.2)  {rank root};
      \draw[->] (rt_label) to (rs1);

      \node[] ()  at (4.60, -2.4)  {[00]};
      \node[roundnode] (7)  at (4.75, -3.0)  {$7$};

      \node[] ()  at (5.9, -2.4)  {[00]};
      \node[roundnode] (6)  at (5.75, -3.0)  {$6$};
      
      \draw[-] (s1p) to (8);
      \draw[-] (s1p) to (rs1);
      \draw[-] (rs1) to (7);
      \draw[-] (rs1) to (6);
    \end{tikzpicture}
    }
    \caption*{Local tree $L_1$ for $T_1$}
  \end{subfigure}
  \caption{Local trees for structural trees of 
    Figure~\ref{fig:exam_st}. Non-green nodes are internal nodes in
    the local tree.}
  \label{fig:exam_lt}
\end{figure}

\begin{example}
In Figure~\ref{fig:exam_lt}, rank roots with unique 
ranks are 5, $s_1$, $rt$ in which $rt$ is directly linked 
to the root $s_0$ while $5$ and $s_1$ are linked to 
$s_0$ via a connecting node. The $nl$ values for nodes 
5, $s_1$ and $rt$ are 1, 2 and 5, respectively. The 
\emph{rank} values for nodes 5, $s_1$ and $rt$ are 0 
(=$\lfloor 1 \rfloor$), 1 (=$\lfloor 2 \rfloor$), and 
2 (=$\lfloor s_1'.nl \rfloor$ + 1 = 1 + 1), respectively.
\end{example}



When a level-$i$ tree edge $(u, v)$ is deleted, 
the local tree 
changes $st_i^u$ to $st_{i+1}^u$, pushing
down all tree edges and nodes in $st_i^u$. All nodes
of $st_i^u$ are firstly removed from the local tree 
$L_i^{u, v}$ of $st_i^{u, v}$, and 
then form a level-$(i+1)$ local tree $L_{i+1}^u$ for
$st_{i+1}^u$. 
We search $L_{i+1}^u$ for a non-tree edge
that reconnects $L_{i+1}^u$ and $L_{i}^v$. The search
starts at the root of $L_{i+1}^u$ and goes to the
left or right subtree whose bitmap[i] is 1.  
If there exists a level-$(i+1)$ non-tree 
edge that reconnects $st_i^u$ and $st_i^v$, $L_{i+1}^u$ 
is inserted into $L_{i}^v$, the level-$i$ local tree that 
contains $v$ for $st_i^v$, as a level-$(i+1)$ node. 
All level-$i$ non-tree edges visited during the search
for a replacement, except the replacement edge itself,
are pushed down to level $i+1$, and 
hence we update the bitmap accordingly.

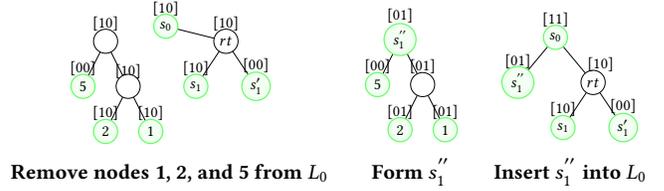
\begin{figure}[htb!]
  \begin{subfigure}{0.50\columnwidth}\centering
    \scalebox{0.40}{
    \begin{tikzpicture}[roundnode/.style={circle, draw=green!60, 
      fill=green!5, very thick, minimum size=8mm}]

      \node  at (-1.5, 0.6)  {[00]};
      \node[roundnode] (5)  at (-1.5, 0.0)  {$5$};
      
      \node  at (0.0, 0.5)  {[10]};
      \node[roundnode] (s1)  at (2.25, 0.0)  {$s_1$};
      
      \node  at (-0.75, 2.1)  {[10]};
      \node[circle, draw, minimum size=8mm] (p1)  at (-0.75, 1.5)  {};
      
      \node  at (2.25, 0.6)  {[10]};
      \node[circle, draw, minimum size=8mm] (l1)  at (0.0, 0.0)  {};
      
      \node  at (-0.75, -0.9)  {[10]};
      \node[roundnode] (2)  at (-0.75, -1.5)  {$2$};
       
      \node  at (0.75, -0.9)  {[10]}; 
      \node[roundnode] (1)  at (0.75, -1.5)  {$1$};
      
      \draw[-] (p1) to (5);
      \draw[-] (l1) to (1);
      \draw[-] (l1) to (2);
      
      \node  at (3.25, 2.1)  {[10]};
      \node[circle, draw, minimum size=8mm] (l2)  at (3.25, 1.5)  {$rt$};
      
      \draw[-] (l2) to (s1);
      
      \node () at (4.25, 0.7) {[00]};
      \node[roundnode] (s1p)  at (4.25, 0.0)  {$s_1'$};
      
      \draw[-] (p1) to (l1);
      \draw[-] (l2) to (s1p);  
      
      \node  at (1.25, 2.6)  {[10]};
      \node[roundnode] (s0)  at (1.25, 2.0)  {$s_0$};
      
      \draw[-] (s0) to (l2);


    \end{tikzpicture}
  }
  \caption*{Remove nodes 1, 2, and 5 from $L_0$}
  \end{subfigure}
  \begin{subfigure}[b]{0.24\columnwidth}\centering
    \scalebox{0.40}{
    \begin{tikzpicture}[roundnode/.style={circle, draw=green!60, 
      fill=green!5, very thick, minimum size=8mm}]

      \node  at (-1.5, 0.6)  {[00]};
      \node[roundnode] (5)  at (-1.5, 0.0)  {$5$};

      \node  at (-0.75, 2.3)  {[01]};
      \node[roundnode] (p1)  at (-0.75, 1.5)  {$s_1^{''}$};
      
      \node  at (0.0, 0.6)  {[01]};
      \node[circle, draw, minimum size=8mm] (l1)  at (0.0, 0.0)  {};
      
      \node  at (-0.75, -0.9)  {[01]};
      \node[roundnode] (2)  at (-0.75, -1.5)  {$2$};
       
      \node  at (0.75, -0.9)  {[01]}; 
      \node[roundnode] (1)  at (0.75, -1.5)  {$1$};
      
      \draw[-] (p1) to (5);
      \draw[-] (l1) to (1);
      \draw[-] (l1) to (2);
      
      \draw[-] (p1) to (l1);
      
    \end{tikzpicture}
    }
    \caption*{Form $s_1^{''}$}
  \end{subfigure}
  \begin{subfigure}[b]{0.24\columnwidth}\centering
    \scalebox{0.40}{
    \begin{tikzpicture}[roundnode/.style={circle, draw=green!60, 
    fill=green!5, very thick, minimum size=8mm}]

      \node  at (2.0, 3.6)  {[11]};
      \node[roundnode] (s0)  at (2.0, 3.0)  {$s_0$};

      \node  at (3.5, 2.1)  {[10]};
      \node[circle, draw, minimum size=8mm] (rt)  at (3.25, 1.5)  {$rt$};

      \node at (0.75, 2.2) {[01]}; 
      \node[roundnode] (s1t)  at (0.75, 1.5)  {$s_1^{''}$};
      
      \node  at (2.25, 0.6)  {[10]};
      \node[roundnode] (s1)  at (2.25, 0.0)  {$s_1$};
      
      \node () at (4.25, 0.7) {[00]};
      \node[roundnode] (s1p)  at (4.25, 0.0)  {$s_1'$};

      \draw[-] (s0) to (s1t);
      \draw[-] (s0) to (rt);
      \draw[-] (rt) to (s1);
      \draw[-] (rt) to (s1p);
    \end{tikzpicture}
    }
    \caption*{Insert $s_1^{''}$ into $L_0$}
  \end{subfigure}
  \caption{Deleting tree edge (2, 3) in the local tree.}
  \label{fig:exam_lt_de}
\end{figure}

\begin{example}
  Figure~\ref{fig:exam_lt_de} shows that tree edge $(2, 3)$ 
  is deleted, nodes 1, 2, 5 are  removed from the the local 
  tree and combined to a level-$(i+1)$ local tree rooted at 
  $s_1^{''}$. The levels of tree edges (2, 5) and (1, 5) 
  are increased
  by 1. The search for a non-tree edge starts at the 
  root node $s_1^{''}$ and we navigate to left or right
  subtrees whose root nodes have the $0$-th bit of the bitmap 
  set to 1. We find two level-0 non-tree edges (1, 2) and 
  (1, 3), of which non-tree edge 
  (1, 3) is a replacement edge.  
  Edge (1, 3) becomes a level-0 tree edge and $s_1^{''}$ is 
  inserted as a level-1 node in $L_0$. 
  We  update the bitmaps for the tree rooted at
  $s_1^{''}$ as the level of 
  non-tree edge (1, 2) increases by 1.
  \label{exam:lt_de}
\end{example}

To further improve the amortized costs for deleting tree 
edges, Thorup proposed the Lazy Local Tree~\cite{thorup2000near} 
that lazily updates the binary tree by additionally 
maintaining a small subtree, called buffer tree.

\subsubsection{Lazy Local Tree}
\label{sec:lzt}

The Lazy Local Tree~\cite{thorup2000near} maintains nodes with small
$nl$ values (number of leaf nodes) and nodes with large $nl$ values separately.  A lazy local
tree maintains two local trees in left and right subtrees 
of the root node, respectively.  The subtree in the left 
branch, called \emph{buffer tree}, contains nodes whose $nl$ values 
are below a threshold $\beta$. The subtree in 
the right branch contains nodes with
$nl$ values $\geq$ $\beta$ or a group of buffer nodes (described in
the following) with a total $nl$ value $\geq$ $\beta$. We call a node
of the buffer tree a \emph{buffer node} and call the right branch of
the lazy local tree \emph{lazy branch}.
A node whose $nl$ value is below $\beta$ can only be
inserted into the buffer tree.  When the $nl$ value of the buffer tree
is equal to or larger than $\beta$, the buffer tree becomes a
\emph{bottom tree}.  The buffer tree (now a bottom tree) is moved 
to the lazy branch, $i.e.,$ inserting the root node of the bottom 
tree as a rank root into the lazy branch (a local tree). 
The left branch of the lazy local tree
is empty.  Inserting a node with
$nl$ value $\geq$ $\beta$ into a lazy local tree is the same as
inserting a node in a local tree. 

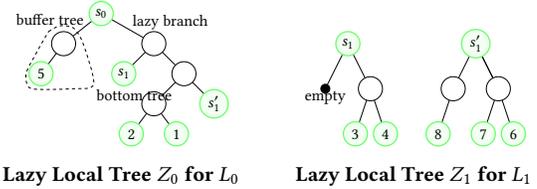
\begin{figure}[htb!]
  \begin{subfigure}[b]{0.5\columnwidth}\centering
    \scalebox{0.4}{
    \begin{tikzpicture}[roundnode/.style={circle, draw=green!60, 
      fill=green!5, very thick, minimum size=8mm}]
      \node[roundnode] (s0)  at (0.5, 3.0)  {$s_0$};
      \node[roundnode] (s1)  at (1.25, 1.0)  {$s_1$};
      \node[circle, draw, minimum size=8mm] (p1)  at (-0.75, 2.0)  {};
      \node[circle, draw, minimum size=8mm] (p2)  at (2.25, 2.0)  {};
      \node[circle, draw, minimum size=8mm] (l1)  at (2.25, 0.0)  {};
      \node[roundnode] (5)  at (-1.5, 1.0)  {$5$};
      \node[roundnode] (2)  at (1.5, -1.0)  {$2$};
      \node[roundnode] (1)  at (3.0, -1.0)  {$1$};
      \node[circle, draw, minimum size=8mm] (l2)  at (3.25, 1.0)  {};
      \node[roundnode] (s1p)  at (4.25, 0.0)  {$s_1'$};
      
      \draw [dashed] plot [smooth] coordinates {(-1.25, 2.4) (-1.0, 2.55) 
      (-0.75, 2.6) (-0.4, 2.5) (-0.1, 2.0) (0.25, 0.65) (-0.5, 0.5) 
      (-1.8, 0.5) (-2.0, 0.85) (-1.9, 1.25) (-1.25, 2.4)} ;
      \node at (-1.2, 2.8) {buffer tree};

      \node at (2.8, 2.7) {lazy branch};
      \node at (1.6, 0.3) {bottom tree};
      \draw[-] (s0) to (p1);
      \draw[-] (s0) to (p2);
      \draw[-] (p2) to (s1);
      \draw[-] (p2) to (l2);
      \draw[-] (p1) to (5);
      \draw[-] (l1) to (1);
      \draw[-] (l1) to (2);
      
      \draw[-] (l2) to (l1);
      \draw[-] (l2) to (s1p);  
    \end{tikzpicture}
    }
    \caption*{Lazy Local Tree $Z_0$ for $L_0$}
  \end{subfigure}
  \begin{subfigure}[b]{0.4\columnwidth}\centering
    \scalebox{0.4}{
      \begin{tikzpicture}[roundnode/.style={circle, draw=green!60, 
      fill=green!5, very thick, minimum size=8mm}]
        \node[roundnode] (s1)  at (1.25, 0.0)  {$s_1$};
        \node at (0.5, -1.8) {empty};
        \node[draw, fill, circle] (sl1)  at (0.5, -1.5)  {{\small }};
        \node[circle, draw, minimum size=8mm] (rs1)  at (2.0, -1.5)  {};    
        \draw[-] (s1) to (sl1);
        \draw[-] (s1) to (rs1);

        \node[roundnode] (3)  at (1.5, -3.0)  {$3$};
        \node[roundnode] (4)  at (2.5, -3.0)  {$4$};
        
        \draw[-] (rs1) to (3);
        \draw[-] (rs1) to (4);

        \node[roundnode] (s1p)  at (5.5, 0.0)  {$s_1'$};
        \node[circle, draw, minimum size=8mm] (ls1p)  at (4.75, -1.5)  {};
        \node[roundnode] (8)  at (4.25, -3.0)  {$8$};

        \node[circle, draw, minimum size=8mm] (rs1p)  at (6.25, -1.5)  {};

        \node[roundnode] (7)  at (5.75, -3.0)  {$7$};
        \node[roundnode] (6)  at (6.75, -3.0)  {$6$};
        
        \draw[-] (s1p) to (rs1p);
        \draw[-] (s1p) to (ls1p);
        \draw[-] (s1p) to (rs1p);
        \draw[-] (ls1p) to (8);
        \draw[-] (rs1p) to (7);
        \draw[-] (rs1p) to (6);
      \end{tikzpicture}
    }
    \caption*{Lazy Local Tree $Z_1$ for $L_1$}
  \end{subfigure}
  \caption{Lazy Local trees with $\beta$ $=$ 2 
  for local trees of Figure~\ref{fig:exam_lt}. Bitmaps are not shown
  for simplicity.
  }
  \label{fig:exam_lzt}
\end{figure}

\begin{example}
  Figure~\ref{fig:exam_lzt} shows lazy local trees for the local tree
  in Figure~\ref{fig:exam_lt}. In $Z_0$, $s_1$ can not be placed in
  the buffer tree as its $nl$ value is equal to $\beta$. After nodes 1
  and 2 are inserted into the buffer tree, the buffer tree become a
  bottom tree which is moved to the lazy branch.
\end{example}

When a level-$i$ tree edge $(u, v)$ is deleted, the lazy
local tree changes $st_i^u$ to $st_{i+1}^u$. All nodes of $st_i^u$ are
removed from the lazy local tree $Z_i^{u, v}$ for $st_i^{u, v}$. There
are three types of nodes in $st_i^u$: (1) buffer nodes (2) nodes in a
bottom tree 
(3) intermediate nodes with $nl$ $\geq$ $\beta$. Removing
(1) and (3) from $Z_i^{u, v}$ works is the same as removing nodes from
a local tree.  Removing nodes from a bottom tree reduces the its $nl$
and the bottom tree is removed from the lazy branch if its $nl$ value
is below $\beta$. Nodes of the removed bottom tree are inserted into
the buffer tree. 
After all nodes of
$st_i^u$ are removed from $Z_i^{u, v}$, they form a new level-$(i+1)$
lazy local tree lazy local tree $Z_{i+1}^u$ for $st_{i+1}^u$.  If
there exists a level-$(i+1)$ non-tree edge that reconnects $st_i^u$
and $st_i^v$, $Z_{i+1}^u$ is inserted into $Z_i^v$.

\begin{figure}[htb!]
  \begin{subfigure}{0.50\columnwidth}\centering
    \scalebox{0.40}{
    \begin{tikzpicture}[roundnode/.style={circle, draw=green!60, 
      fill=green!5, very thick, minimum size=8mm}]
      \node[roundnode] (s0)  at (2.25, 1.5)  {$s_0$};
      \node[circle, draw, minimum size=8mm] (rt)  at (3.25, 0.5)  {$rt$};
      \node[draw, fill, circle] (e)  at (1.25, 0.5)  {};
      \node[roundnode] (s1p)  at (4.25, -1.0)  {$s_1'$};
      \node[roundnode] (s1)  at (2.25, -1.0)  {$s_1$};
      \node[roundnode] (5)  at (-1.5, 0.0)  {$5$};

      \node[roundnode] (2)  at (-0.75, -1.0)  {$2$};
      \node[roundnode] (1)  at (0.25, -1.0)  {$1$};

      \draw[-] (rt) to (s1);

      \draw[-] (rt) to (s1p);  
      \draw[-] (s0) to (e);
      \draw[-] (s0) to (rt);
    \end{tikzpicture}
    }
    \caption*{Remove nodes 1, 2, and 5 from $Z_0$}
  \end{subfigure}
  \begin{subfigure}[b]{0.24\columnwidth}\centering
    \scalebox{0.40}{
      \begin{tikzpicture}[roundnode/.style={circle, draw=green!60, 
        fill=green!5, very thick, minimum size=8mm}]
        \node[roundnode] (5)  at (-1.75, -1.5)  {$5$};
        \node[roundnode] (r)  at (-0.75, 1.5)  {$s_1^{''}$};
        \node[circle, draw, minimum size=8mm] (l1)  at (-1.25, 0.0)  {};
        \node[circle, draw, minimum size=8mm] (l2)  at (0.0, 0.0)  {};
        \node[roundnode] (2)  at (-0.5, -1.5)  {$2$};
        \node[roundnode] (1)  at (0.75, -1.5)  {$1$};
        \draw[-] (l1) to (5);
        \draw[-] (l2) to (1);
        \draw[-] (l2) to (2);
        \draw[-] (r) to (l2);
        \draw[-] (r) to (l1);
      \end{tikzpicture}
    }
    \caption*{Form $s_1^{''}$}
  \end{subfigure}
  \begin{subfigure}[b]{0.24\columnwidth}\centering
    \scalebox{0.40}{
    \begin{tikzpicture}[roundnode/.style={circle, draw=green!60, 
        fill=green!5, very thick, minimum size=8mm}]
      \node[roundnode] (s0)  at (1.75, 3.0)  {$s_0$};
      \node[circle, draw, fill] (l1)  at (1.0, 2.0)  {};
      \node[circle, draw, minimum size=8mm] (l2)  at (2.5, 2.0)  {};
      \node[circle, draw, minimum size=8mm] (rt)  at (3.25, 1.0)  {$rt$};
      \node[roundnode] (s1t)  at (1.75, 1.0)  {$s_1^{''}$};
      \node[roundnode] (s1)  at (2.5, 0.0)  {$s_1$};
      \node[roundnode] (s1p)  at (4.0, 0.0)  {$s_1'$};

      \draw[-] (s0) to (l2);
      \draw[-] (s0) to (l1);
      \draw[-] (l2) to (rt);
      \draw[-] (l2) to (s1t);
      \draw[-] (rt) to (s1);
      \draw[-] (rt) to (s1p);
    \end{tikzpicture}
    }
    \caption*{Insert $s_1^{''}$ into $Z_0$}
  \end{subfigure}
  \caption{Operations for deleting the tree edge (2, 3) in the 
  lazy local tree $Z_0$.}
  \label{fig:exam_lzt_de}
\end{figure}
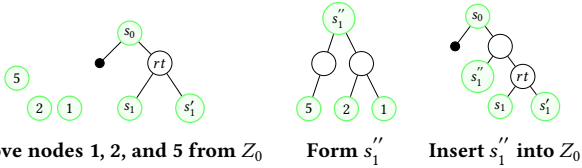

\begin{example} \label{exam:lzt_de}%
  Figure~\ref{fig:exam_lzt_de} shows the operations for deleting the
  tree edge (2, 3) in $Z_0$. The bottom tree that contains nodes 1 and
  2 is removed from the lazy branch as removing either node 1 or node
  2 makes its $nl$ value less than $\beta$. Thus, the remained node in
  the bottom tree is removed from the lazy branch and added to the
  buffer tree.  Nodes 1, 2 and 5 form a level-($i+1$) lazy local tree
  rooted at $s_1^{''}$, which is inserted into the lazy branch of
  $Z_0$.
\end{example}

\section{Implementations}
\label{sec:imp}

Table~\ref{tab:imp} shows that many data structures discussed in
Section~\ref{sec:maintaining_ds} were not implemented before or use
outdated third-party libraries.  An link-cut tree cannot be
used for the dynamic connectivity problem directly as it does not
  support the search of a replacement edge and the change of root node
  of a path. Thus, we extend existing
  implementations\footnote{\url{https://usaco.guide/adv/link-cut-tree?lang=cpp}} with this functionality.
To compare the data structures, we have implemented all of them in Python.
In this section, we describe important details for implementing the
data structures and discuss how we measure memory footprints.

\begin{table}[htb!]\centering
  \begin{tabular} {|c|c|c|}
    \hline
    \textbf{Data structure} & \textbf{Year} & \textbf{Implementation}\\
    \hline
    Link-cut tree (LCT) & 1981 & \cmark \\
    \hline
    ET-tree (HKS) & 1997 & (\cmark)\\
    \hline
    ET-tree (HK)  & 1995 & (\cmark)\\
    \hline
    ET-tree (HDT) & 1998 & (\cmark) \\
    \hline
    Structural Tree (ST) & 2000 & \xmark\\
    \hline
    Structural Tree variant (STV) & 2013 & \xmark\\
    \hline
    Local Tree (LT) & 2000 & \xmark\\
    \hline
    Local Tree variant (LTV) & 2013 & \xmark\\
    \hline
    Lazy Local Tree (LzT) & 2000 & \xmark\\
    \hline
    D-tree & 2022 & \cmark\\
    \hline
  \end{tabular}
  \caption{Status for major data structures. The symbol "(\cmark)"
    means that the implementation is based on an outdated third-party
    library that is not currently maintained. }
  \label{tab:imp}
\end{table}


\subsection{D-tree}
\label{sec:imp_dtree}

Each node $u$ in the D-tree has the following attributes:
\begin{itemize}
\item $key$: unique integer identifying $u$
\item $children$: set of pointers to the children of $u$
\item $parent$: pointer to parent of $u$
\item $size$: number of nodes in subtree rooted at $u$
\item $nte$: set of non-tree neighbors of $u$
\end{itemize}
A D-tree maintains a dictionary to look up tree nodes.  Thus, 
given a key $x$, the dictionary returns the tree node with $key$ 
equal to $x$. 
Implementation details are shown in ~\cite{dtree}.

\subsection{Link-cut Tree}
\label{sec:imp_lct}
Each node $u$ in the link-cut tree (LCT) has the following attributes:
\begin{itemize}
\item $key$: unique integer identifying $u$
\item $left$: pointer to left child of $u$ in the splay tree 
\item $right$: pointer to right child of $u$ in the splay tree 
\item $parent$: pointer to parent of $u$ in the splay tree 
\item $nte$: set of non-tree neighbors of $u$ in the spanning tree
\item $path\_parent$: pointer to parent of $u$ in the LCT
\end{itemize}
An LCT maintains a dictionary to look up tree nodes.

\subsection{HKS, HK and HDT}
\label{sec:imp_hks}
Existing implementations~\cite{Alberts1995, David_HKvariant,
  Iyer2002experimental} for HKS, HK and HDT use randomized search
trees~\cite{seidel1996randomized} to implement ET-trees
\cite{Alberts1995}.  A randomized search tree is a balanced binary
search tree where the search key of each node, the \emph{priority}, 
is a random integer. Tree nodes whose keys are equal to 
the first occurrences are active nodes.
For example, Figure~\ref{fig:exam_et} shows an ET-tree, 
in which nodes with keys equal to $1_1$, $2_1$, ..., $8_1$ 
are active nodes.  All level-$i$ non-tree neighbors of a vertex 
$u$ in the graph are stored in the active node with key $u_1$ 
in the ET-tree. HKS, HK and HDT use different  data structures 
to keep track of edge and node information. We start out with the 
attributes maintained by HK. Each node $u$ of HK has the 
following attributes:
\begin{itemize}
\item $key$: integer identifier of $u$ 
\item $left$: pointer to left child of $u$ 
\item $right$: pointer to right child of $u$ 
\item $parent$: pointer to parent of $u$ 
\item $priority$: priority of $u$ 
\item $active$: true for $u_1$ of key $u$ 
\item $nte$: set of non-tree neighbors of $u$ 
\item $weight$: number of non-tree edges in tree rooted at $u$ 
\item $act\_dict\_i$: maps $u$ to the level-$i$ tree node with key $u_1$ 
\item $rst$: a randomized search tree where nodes are non-tree neighbors of $u$ 
\end{itemize}

HK maintains the following attributes for keeping tracking of Euler tours
and levels of edges:
\begin{itemize}
  \item $tree\_edge\_pointer\_i$: maps a tree edge of a level-$i$ cumulative 
  spanning
  tree $(u, v)$ to four level-$(i+1)$ tree nodes with keys $u_x$, $v_y$, 
  $v_p$ and $u_q$ 
  \item $level\_of\_tree\_edges\_i$: set of level-$i$ tree edges
  \item $level\_of\_nontree\_edges\_i$: set of level-$i$ 
non-tree edges
\end{itemize}

\begin{example}
Consider the level-0 spanning tree $_c\mathcal{ST}_0$ and the 
ET-tree for $_c\mathcal{ST}_0$ in Figure \ref{fig:exam_et}. 
Table \ref{tab:exam_HK_attributes} shows 
examples for HK's 
$tree\_edge\_pointer\_0$ and $act\_dict\_0$.
\end{example}

\begin{table}[!htb]
  \begin{minipage}{.50\linewidth}
    \centering
    \begin{tabular}{ c | c  }
      tree edge & tree nodes \\
      \hline
      (7, 8) & $7_1$, $8_1$, $8_2$, $7_2$  \\ 
      (5, 2) & $5_1$, $2_1$, $2_2$, $5_2$  \\    
    \end{tabular}
    \caption*{$tree\_edge\_pointer\_0$}
  \end{minipage}%
  \hspace{1cm}
  \begin{minipage}{.33\linewidth}
    \centering
    \begin{tabular}{ c | c  }
      vertex & tree node \\
      \hline
      7 & $7_1$  \\ 
      8 & $8_1$\\    
    \end{tabular}
    \caption*{$act\_dict\_0$}
  \end{minipage} 
  \caption{Examples for $tree\_edge\_pointer\_0$ and 
            $act\_dict\_0$}
  \label{tab:exam_HK_attributes}
\end{table}

\newcommand{\myminus}{\rotatebox[origin=c]{90}{\textcolor{red}{\ding{121}}}}
\newcommand{\myplus}{\textcolor{green}{\ding{58}}}
\begin{figure}[htb!]
  \scalebox{0.7} {
    \begin{tikzpicture}
      \begin{scope}
        \node at (-1.8,0) [draw = green,name=HK,rectangle, minimum width=2.0cm,minimum height=1.0cm, anchor=south, font={\normalsize}] {$HK$};
        \node at (2.3,0) [draw,name=HDT,rectangle, minimum width=2.0cm,minimum height=1.0cm, anchor=south, font={\normalsize}] {$HDT$};
        \node[font={\normalsize}] ()  at (0.2, 0.8)  {\myplus size};
        \node[font={\normalsize}] ()  at (0.2, 0.3)  {\myminus weight};
        \node[font={\normalsize}] ()  at (0.2, 0.0)  {\myminus $rst$};
        
        \node at (-6.0,0) [draw,name=HKS,rectangle, minimum width=2.0cm,minimum height=1.0cm, anchor=south, font={\normalsize}] {$HKS$};
        \node[font={\normalsize}] ()  at (-4.0, 0.3)  {drop levels};

        \draw[line width=0.5mm, decoration = { markings, 
          mark=at position 1 with {\arrow[ultra thick]{>}}},
          postaction={decorate}] (HK) to (HKS);
        \draw[line width=0.5mm, decoration = { markings, 
          mark=at position 1 with {\arrow[ultra thick]{>}}},
          postaction={decorate}] (HK) to (HDT);
      \end{scope}
    \end{tikzpicture}
  }
  \caption{Attributes maintained by HK, HKS and HDT. Symbols \myplus~
    and \myminus~ represent adding and removing attributes,
    respectively.}
  \label{fig:attributes_ET}
\end{figure}
HKS simplifies HK by not maintaining attributes
related to levels. Attributes $act\_dict\_i$ and 
$tree\_edge\_pointer\_i$ are replaced by the following attributes:
\begin{itemize}
  \item $act\_dict$: maps $u$ to the tree node with key $u_1$ 
  \item $tree\_edge\_pointer$: maps a tree edge $(u, v)$ to 
  four tree nodes with keys $u_x$, $v_y$, $v_p$ and $u_q$
\end{itemize}
Attributes $level\_of\_tree\_edges\_i$ and 
$level\_of\_nontree\_edges\_i$ are not maintained by HKS. 
HDT maintain the $size$ attribute that is the number of nodes 
in tree rooted at $u$ instead of $weight$ attribute. 
HKS and HK store the non-tree neighbors of a tree node $u$ in 
a randomized search tree ~\cite{Alberts1995} to speed up the 
random selection of a non-tree neighbor for $u$. HDT does not use
randomized search trees for the non-tree neighbors, which saves 
space. 
Our experiments show that storing non-tree neighbors in 
randomized search trees yields high memory footprints (see details 
in Section \ref{sec:memory_footprint}).
Transformations among HK, 
HKS and HDT are shown in Figure~\ref{fig:attributes_ET}.

\subsection{Structural Trees and (Lazy) Local Trees}

There are two different versions of structural trees. The structural
tree (ST) proposed by Thorup~\cite{thorup2000near} maintains tree
edges and non-tree edges. This means that given a node $u$, level-$i$
neighbors of $u$ are classified into level-$i$ tree edge neighbors
$adj_i^t(u)$ and level-$i$ non-tree edge neighbors $adj_i^{nt}(u)$.
Wulff-Nilsen~\cite{wulff2013faster} proposed a variant of the
structural tree (STV) that does not differentiate tree and non-tree
edges.  Instead all level-$i$ neighbors of $u$ are stored in
$adj_i(u)$.  Since local trees are based on structural trees, there
are also two versions for local trees. We call the local tree proposed
by Thorup~\cite{thorup2000near} $LT$ and the variant of the local tree
proposed by Wulff-Nilsen~\cite{wulff2013faster} $LTV$.  The lazy local
tree (LzT) maintains the exact same attributes as LT as LzT is
essentially a local tree.  We start with the attributes maintained by
ST.  ST maintains the following attributes:
\begin{itemize}
  \item $key$: unique integer identifying $u$
  \item $children$: set of pointers to the children of $u$
  \item $parent$: pointer to parent of $u$ 
  \item $nl$: number of leaf nodes in tree rooted at $u$
  \item $level$: the level of $u$
  \item $adj\_t\_i$:  maps $u$ to level-$i$ tree neighbors 
  \item $adj\_nt\_i$: maps $u$ to level-$i$ non-tree neighbors
\end{itemize}

\begin{figure}[htb!]
  \scalebox{0.66} {
  \begin{tikzpicture}
    \begin{scope}
      \node at (0,3.5) [draw = orange,name=ST,rectangle, minimum width=2.5cm,minimum height=1.2cm, anchor=south, 
      font={\LARGE}] {$ST$};
      \node at (6.0,3.5) [draw,name=STV,rectangle, minimum width=2.5cm,minimum height=1.2cm, anchor=south, 
      font={\LARGE}] {$STV$};
      \node at (0,0) [draw,name=LT,rectangle, minimum width=2.5cm,minimum height=1.2cm, anchor=south, font={\LARGE}] {$LT$, $LzT$};
      \node at (6.0,0) [draw,name=LTV,rectangle, minimum width=2.5cm,minimum height=1.2cm, anchor=south, font={\LARGE}] {$LTV$};
      
      \draw[line width=0.5mm, decoration = {markings, 
        mark=at position 1 with {\arrow[ultra thick]{>}}},
        postaction={decorate}] (ST) to (STV);
      \node[font={\normalsize}] ()  at (3.0, 4.5)  {\myplus $adj_i$};
      \node[font={\normalsize}] ()  at (3.0, 3.8)  {\myminus $adj\_t\_i$};
      \node[font={\normalsize}] ()  at (3.0, 3.4)  {\myminus $adj\_nt\_i$};
      
      \draw[line width=0.5mm, decoration = {markings, 
        mark=at position 1 with {\arrow[ultra thick]{>}}},
        postaction={decorate}] (ST) to (LT);
      \node[font={\normalsize}] ()  at (0.85, 2.5)  {\myminus $children$};
      \node[font={\normalsize}] ()  at (-0.75, 3.25)  {\myplus $left$};
      \node[font={\normalsize}] ()  at (-0.75, 2.75)  {\myplus $right$};
      \node[font={\normalsize}] ()  at (-1.25, 2.25)  {\myplus $tree\_bitmap$};
      \node[font={\normalsize}] ()  at (-1.5, 1.75)  {\myplus $nontree\_bitmap$};
      
      \draw[line width=0.5mm, decoration = {markings, 
        mark=at position 1 with {\arrow[ultra thick]{>}}},
        postaction={decorate}] (STV) to (LTV);
      \node[font={\normalsize}] ()  at (5.0, 2.5)  {\myminus $children$};
      \node[font={\normalsize}] ()  at (6.75, 2.75)  {\myplus $left$};
      \node[font={\normalsize}] ()  at (6.75, 2.40)  {\myplus $right$};
      \node[font={\normalsize}] ()  at (6.75, 2.0)  {\myplus $bitmap$};
      
      \draw[line width=0.5mm, decoration = {markings, 
        mark=at position 1 with {\arrow[ultra thick]{>}}},
        postaction={decorate}] (LT) to (LTV);
      \node[font={\normalsize}] ()  at (3.0, 0.8)  {\myplus $$bitmap$$};
      \node[font={\normalsize}] ()  at (3.0, 0.2)  {\myminus $tree\_bitmap$};
      \node[font={\normalsize}] ()  at (3.0, -0.2)  {\myminus $nontree\_bitmap$};
    \end{scope}
  \end{tikzpicture}
  }
  \caption{Attributes maintained by ST, STV, LT, LTV and LzT. 
  Symbols \myplus~ and \myminus~ represent adding and removing 
  attributes, respectively.}
\end{figure}

STV maintains one attribute $adj\_i$ that maps a node $u$ to its
level-$i$ neighbors $adj_i(u)$ instead of two attributes $adj\_t\_i$
and $adj\_nt\_i$. Local trees are binary trees and hence have
attributes $left$ and $right$ that are pointers to left child and
right child of a node, respectively, instead of the $children$
attribute.  Moreover, nodes of local trees maintain bitmaps and node
types which are integers. LT maintains two node attributes for
bitmaps, $tree\_bitmap$ and $nontree\_bitmap$ that is an array of 64
bits for tree neighbors and non-tree neighbors of a node,
respectively.  LTV maintains one node attribute $bitmap$ that is an
array of 64 bits for neighbors of a node.

\subsection{Measuring Memory Footprints}

Some data structures have better space utilization than others due
to the techniques they are leveraging.  D-trees (spanning trees) use
less space as tree edges are stored in parent and children
attributes.
ET-trees by nature need more space as they have to keep track of
information for Euler tours.  
Data structures in Table~\ref{tab:imp} use integers, pointers, 
sets and maps (dictionaries) to store nodes attributes and edges. 
We determine the size of an object in bytes, 
to calculate the memory 
footprint for the data structures. 
The size for a set and dictionary is dynamic and determined 
by the number of elements.  
Low memory footprints are good in practice 
as they tend not to cause memory issues.

\section{Experimental Evaluation}
\label{sec:experiments}

We evaluate the data structures in Table~\ref{tab:imp} on a wide range
of synthetic and real-world graphs to work out the memory footprints
and performance for connectivity queries and update operations. Our
goal is determining the data structures that are feasible to be
deployed in graph database management systems (GDBMS).  Towards this
end, we conduct numerous experiments to identify the strengths and
weaknesses of each data structure in Table~\ref{tab:imp}.

\subsection{Setup}

All data structures and algorithms were implemented in Python 3. We
conduct experiments on a machine with 200 GB RAM and 80 GB swap
memory, running Debian 10.  We repeated all experiments for 5 times on
the same machine, which show very similar results.

\subsection{Datasets}
The diameters of real-world graphs play a crucial
role in the performances of the data structures \cite{dtree}. 
We distinguish between graphs with small diameters
and graphs with large diameters. Real-world graphs tend to 
have small diameters \cite{konect_diameter}.  
USA Road \cite{road_usa} is a real-world graph with a 
large diameter.
We include representative large real-world graphs from 
various domains (Youtube \cite{mislove09}, 
Stackoverflow \cite{dataset_Stackoverflow}, 
USA Road \cite{road_usa}, 
and Semantic Scholar \cite{ammar}). 
To further enhance the diversity of our datasets, we 
use networkX \cite{networkx} to generate a random graph, a 
power-law graph, a complete graph, a star graph, and a path graph.
Semantic Scholar is used to test the scalability of the data structures. 
Table~\ref{table:datasets} summarizes the
graphs used in our experiments.

\begin{table}[htb!]\centering
  \begin{tabular}{|c|c|c|c|c|}
  \hline
  Name & $|V|$ & $|E|$ & diameter   \\
  \hline
  \textbf{\small Star Graph (SG)} {\scriptsize~\cite{networkx_star}} 
  & $10^7$ & $10^7$ & small \\
  \hline
  \textbf{\small Path Graph (PG)} {\scriptsize~\cite{networkx_path}}
  & $10^7$ & $10^7$ & large\\
  \hline
  \textbf{\small Complete Graph (CG)}{\scriptsize~\cite{networkx_complete}} 
   & 4472 & $10^7$ & small \\
  \hline
  \textbf{\small Random Graph (RG)} {\scriptsize~\cite{networkx_random}} 
   & $10^6$ & $10^7$ & small \\
  \hline
  \textbf{\small Power-law Graph (PL)} {\scriptsize~\cite{networkx_powerlaw}} 
  & $10^6$ & $10^7$ & small\\
  \hline
  \textbf{\small youtube-growth (YT)}{\scriptsize~\cite{mislove09}} & 3.2 $ \times 10^6$ & 1.44$  \times 10^7$ & small \\
  \hline
  \textbf{\small Stackoverflow (ST)}{\scriptsize~\cite{dataset_Stackoverflow}} & 2.6 $ \times 10^6$ & 6.3$  \times 10^7$ & small  \\
  \hline
  \textbf{\small USA Road (USA)}
  {\scriptsize~\cite{road_usa}} &  2.4 $\times 10^7$ &5.7 $\times 10^7$ & large \\
  \hline
  \textbf{\small Semantic Scholar (SC)}
  {\scriptsize~\cite{ammar}} & 6.5 $\times 10^7$ & 8.27$  \times 10^9$ & small \\
  \hline
  \end{tabular}
  \caption{Statistics of datasets.}
  \label{table:datasets}
\end{table}

\subsection{Memory Footprint}
\label{sec:memory_footprint}

The D-tree, LCT, ST and STV are space-efficient data 
structures as they
directly keep track of connected nodes. The other data structures are
deletion-efficient data structures as they optimize the amortized 
costs for finding replacement edges. We measure the memory
footprints of the data structures after all edges have been inserted
and show the experimental results in Figure~\ref{fig:memory_insertions}.

\begin{figure}[htb!]
  \scalebox{0.75} {
     \includegraphics{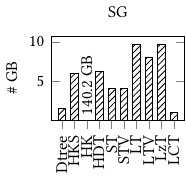}
     \includegraphics{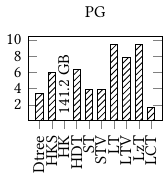}
     \includegraphics{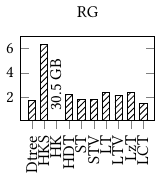}
     \includegraphics{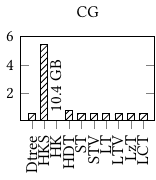}
    }
    \scalebox{0.75} {
     \includegraphics{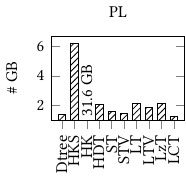}
     \includegraphics{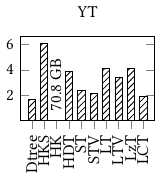}
     \includegraphics{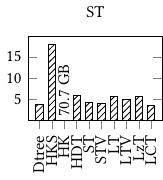}
     \includegraphics{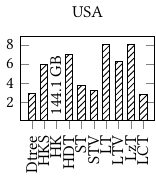}
    }
  \caption{Memory footprints for the data structures 
  of the graphs in Table~\ref{table:datasets}. All data structures 
  except the D-tree run out of memory on SC.  Update performances
    are discussed in Section~\ref{sec:update_performance}.}
  \label{fig:memory_insertions}
\end{figure}

Figure~\ref{fig:memory_insertions} shows that the D-tree has the
lowest memory footprint, followed by ST and STV.  HK has the highest
memory footprint as it maintains ET-trees for $O(\log n)$ levels
and stores non-tree neighbors in random search trees .  
HK uses up to 85x more memory than the
D-tree.  ST and STV have lower memory footprints than LT, LTV and LzT
on all datasets. As shown in Section \ref{sec:lt}, transforming
structural trees to local trees requires additional rank roots and
additional nodes connecting rank roots.  Hence, LT, LTV and LzT
maintain more nodes than ST and STV.  The memory footprints for HKS,
HK and HDT are determined by ET-trees and cumulative spanning trees.
As shown in Section \ref{sec:imp_hks}, additional data structures are
needed to keep track of Euler tours.  Our experiments confirm that
HKS, HK and HDT have higher memory footprints than 
D-tree, LCT and ST. 
No deletion of edges is the best case for HDT as no edges
are pushed down. Even so, HDT has higher memory footprints than the
D-tree, ST and STV on all datasets. When edges are pushed down to
different levels due to the deletion of edges, HDT maintains ET-trees
for cumulative spanning trees at all levels, which can have the same
memory footprints as HK in the worst case. HKS has higher 
memory
footprints than HDT especially on CG and ST as HKS stores non-tree
edges neighbors in random search trees.  On SC, the largest
experiments used in our experiments, only the D-tree finishes the
insertions of all edges while other data structures run out of memory. 
The D-tree uses 155 GB of memory on SC.

Table \ref{tab:space_complexity} shows the space complexity for 
the different data
structures. All data structures need space to store graph 
edges. A D-tree is a spanning tree where the parent and children 
attributes are
tree edges.  A D-tree needs $|E|$ - $|V|$ space to store non-tree
edges and $c *|V|$ space to store nodes. 
Graphs without non-tree edges yields
the lower bound $c *|V|$. Graphs with $|E|$ $=$ $O(|V|^2)$ 
yield the upper bound $|E|$ + $(c - 1)*|V|$.
Parent, children, left and
right attributes in all other data structures are not edges of 
 tree
structures and hence at least $|E|$ space is required to store 
edges. ST, STV, LT, LTV and LzT need exactly 
$|E|$ space to store graph edges. HKS, HK and HDT need more 
than $|E|$ space to store graphs edges since 
edges are stored in randomized search trees.
The lower bound of the space 
complexity for storing nodes in ST, STV, LT, LTV and LzT is
$ c * |V|$ as they need additional space for super 
nodes. HK and HDT
use $c * |V| \times \log |V|$ space for nodes as they maintain
$\log |V|$ ET-trees, one for each level.  HKS and HK need additional
space for storing non-tree neighbors in random search trees. 
Note that constant $c$ differs from method to method.

\begin{table}[htb!]\centering
\scalebox{0.82} {
  \begin{tabular}{|c|c|c|}
    \hline
    Data structures & Lower bound & Upper bound\\ 
    \hline
    D-tree & $c*|V|$ & $|E|$ + $(c - 1)*|V|$\\
    \hline
    LCT & $c*|V|$ & $|E|$ + $(c - 1)*|V|$\\
    \hline
    ST, STV  &  $|E| + c * |V|$ & $|E| + 2 * c * |V|$\\
    \hline
    LT, LTV, LzT & $|E| + c * |V|$ & $|E| + c' * |V|~(c' > c)$ \\
    \hline
    HKS & $|E| + c * |V|$ & $k*|E| + c * |V|$ , $k > 1$\\
    \hline
    HDT & $|E| + c * |V|$ & $k*|E| + c * |V| * \log |V|$, $k \geq 1$ \\
    \hline
    HK & $|E| + c * |V|$ & $k*|E| + c * |V| * \log |V|$, $k > 1$\\
    \hline
  \end{tabular}
}  
\caption{Space complexity for the data structures. Note that
    $|V|$ $\leq$ $|E|$ $\leq$ $|V|*(|V|-1)/2$.}
  \label{tab:space_complexity}
\end{table}

\subsection{Performance of Update Operations} 
\label{sec:update_performance}

We generate workloads that consist of sequences of
  interleaved insertions and deletions.  Since no general workloads
  exist we created them in a way that makes it possible to vary the
  growth rate of graphs and makes insertions and deletions independent
  of each other (workloads that delete all inserted edges are not
  general since many graphs grow \cite{leskovec07tkdd,
    leskovec2005graphs}; workloads that delete edges a certain time
  after they have been inserted ~\cite{dtree, song24,
  zhang2024incremental} render insertions and deletions
  dependent on each other).  We use $u_r$ =
  $\frac{\# insertions}{\# deletions}$, i.e., one deletion occurs per
  $u_r$ insertions, to quantify the growth of the graph.  We start out
  with the empty graph and insert the edges of the graphs in
  Table~\ref{table:datasets}.  When deleting an edge, we randomly
  select an edge from the graph.  Small $u_r$ values mean that edges
are deleted frequently and therefore generate graphs with small
connected components.  Processing small components is fast and can be
done with brute-force approaches and is not the focus of our paper.
Large $u_r$ values mean that edges are rarely deleted.  For our
experiments, we set $u_r$ to 1000, 100, 20, 5 to cover representative
scenarios where graphs are growing quickly ($u_r$ = 1000) and slowly
($u_r$ = 5).

When a workload is run we measure and sum up the runtimes
  of, respectively, insertions and deletions.  These times are used to
  calculate the average runtime of insertions and deletions for each
  data structure.  We first compare data structures in the same
category (for details see Section~\ref{sec:maintaining_ds}) 
and determine the best data structure of each category for
  further comparisons.  We compare the best performing structures,
and give our guidance for using the data structures.  When evaluating
the performance of update operations, 
we do not include SC since only the
D-tree can finish the workloads of updates within a few hours. The
other approaches can not finish the workloads in several days and end
up running out of memory.

\subsubsection{D-trees and LCTs}
\label{sec:update_dtree_lct}

Figure~\ref{fig:update_dtrees} shows the average run time for 
insertions and deletions for D-trees. Inserting edges on 
PG and deleting edges on USA is up to several orders of 
magnitude slower than other datasets. Both PG and USA 
have large diameters and the D-tree degenerates. 
The reason is that 
vertices of graphs with large diameters are connected through 
long paths and consequently the spanning trees for these graphs 
include long paths. Updating D-trees demands the 
traversals of nodes on long paths, which is inefficient.

\begin{figure}[htb!]
  \begin{subfigure}[htb]{1.0\columnwidth}\centering
    \scalebox{0.85} {
     \includegraphics{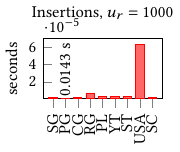}
     \includegraphics{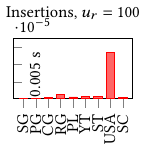}
     \includegraphics{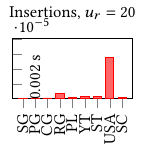}
     \includegraphics{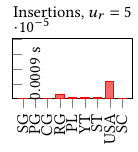}
    }
  \end{subfigure}
  \begin{subfigure}[htb]{1.0\columnwidth}\centering
    \scalebox{0.80} {
     \includegraphics{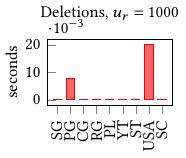}
     \includegraphics{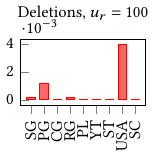}
     \includegraphics{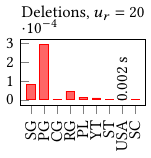}
     \includegraphics{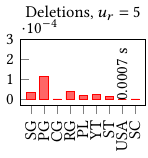}
    }
  \end{subfigure}
  \caption{Average runtime of update operations for D-trees. }
  \label{fig:update_dtrees}
\end{figure}

\begin{figure}[htb!] \centering
  \begin{subfigure}[b]{1.0\columnwidth}
    \includegraphics{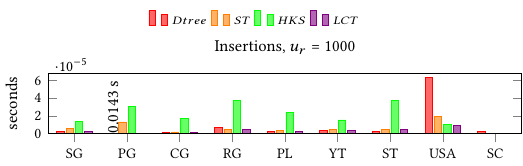}
  \end{subfigure}
  \begin{subfigure}[b]{1.0\columnwidth}
    \includegraphics{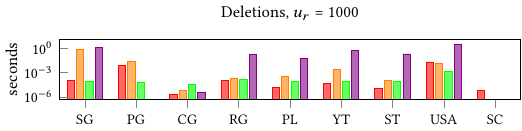}
  \end{subfigure}

  \caption{Average runtime of update operations for the
      D-tree, LCT, ST and HKS with $u_r$ = 1000. LCT runs out of time
      on PG.  LCT, HKS, and ST run out of memory on SC.}
  \label{fig:dtree_st_hks_lct}
\end{figure}

Figure~\ref{fig:dtree_st_hks_lct} shows that LCT is dominated
by D-tree and ST in terms of update performances. 
Deleting edges in LCT is up to $10^4$ times
slower than other data structures.  Query performances for LCT are 
dominated by ST.

\subsubsection{HKS, HK and HDT}

When evaluating the performance of update operations for HK, we find
that updating HK is up to several orders of magnitude slower than HKS
and HDT as HK constantly rebuilds the ET-trees across different
levels.  Hence, we do not show the update performances for HK.  Update
performances for HKS and HDT are shown in Figure~\ref{fig:HKS_HK_HDT}
and we find that HKS and HDT have similar update performances. HKS and
HDT are comparable in inserting edges.  Performances of deletions for
HKS and HDT are similar, which means that HDT optimizations do not pay
off.  HDT is up to several orders of magnitude slower than HKS on PG
where all edges are tree edges.  For example, HDT is 1335x slower than
HKS on PG with $u_r$ = 1000. The deletion of any edge in PG splits one
path and HDT pushes down all nodes in the smaller path. Compared to
HDT, HKS does not push down edges, which can save lot of time.  HKS
outperforms HDT when edges are frequently deleted, $i.e.,$ $u_r$ = 5.
Since HKS is simpler than HDT, we choose HKS over HDT.

\begin{figure}[htb!] \centering
  \begin{subfigure}[b]{0.55\columnwidth}
    \includegraphics{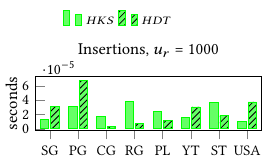}
    \includegraphics{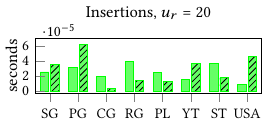}
  \end{subfigure}
  \begin{subfigure}[b]{0.42\columnwidth}
    \includegraphics{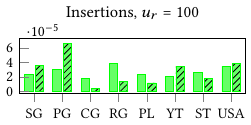}
    \includegraphics{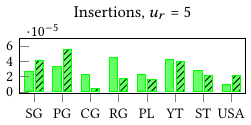}
  \end{subfigure}
  \begin{subfigure}[b]{0.55\columnwidth}
    \includegraphics{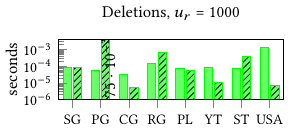}
    \includegraphics{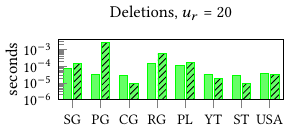}
  \end{subfigure}
  \begin{subfigure}[b]{0.42\columnwidth}
    \includegraphics{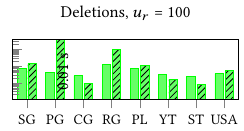}
    \includegraphics{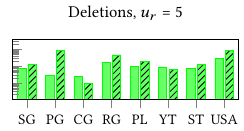}
  \end{subfigure}
  \caption{Average runtime of update operations for HKS and HDT. 
  HKS and HDT run out of memory on SC. }
  \label{fig:HKS_HK_HDT}
\end{figure}

\subsubsection{ST, STV, LT, LTV and LzT} 

Figure \ref{fig:res_ST_LT} shows the update performances for ST, STV,
LT, LTV and LzT when $u_r$ = 1000 and 100.  We do not show evaluations
for $u_r$ = 20 or 5 as LT, LTV and LzT run out of time on these
workloads for some datasets.  Our experiments show that ST outperforms
LT and LzT in both insertions and deletions. It shows that the LT
and LzT optimizations do not improve but slow down the
update performances. The reason is that removing nodes from rank 
trees, adding nodes into rank trees and merging rank trees are more
complicated than ST operations.  For ST and LT, differentiating between tree
edges and non-tree edges is not efficient in practice.  ST is
comparable to STV in inserting edges over all datasets except the 
USA dataset. ST is up to 1300x faster than STV for deleting
edges. LT outperforms LTV in inserting and deleting edges. Since ST
outperforms LT, LzT, STV and LTV, ST is the best in this category. 
We find that deletion operations in ST, STV, LT, LTV
and LzT can take extremely long time. For
example, deleting a particular edge in LT on 
YT takes $10^4$ seconds. The
deletion outliers are caused by pushing down a large number of nodes. 
ST is considered the best of this category.

\begin{figure}
  \includegraphics{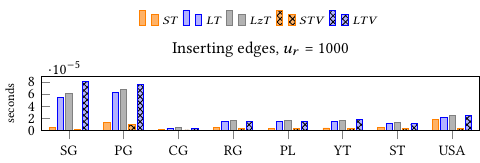}
  \includegraphics{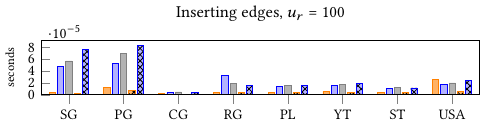}
  \includegraphics{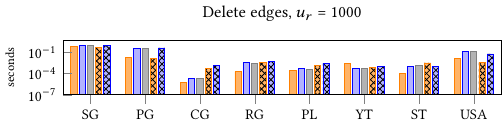}
  \includegraphics{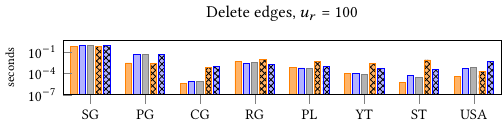}
  \caption{Average runtime of update operations 
  for ST, STV, LT, LTV and LzT. ST, LT, LzT, STV and LTV 
  run out of memory on SC. }
  \label{fig:res_ST_LT}
\end{figure}

\subsubsection{D-tree, ST and HKS}
\label{sec:dtree_st_hks}
We compare D-tree, HKS and ST, the best of each
category.
Figure \ref{fig:dtree_st_hks} shows the update performances
for D-tree, ST and HKS, 
with $u_r$ = 1000 and $u_r$ = 100. Inserting edges in ST is the fastest.
Inserting edges in D-trees is comparable to ST 
except on PG and USA, which are graphs with large diameters. 
Inserting edges in HKS is the slowest and 
up to one order of magnitude slower than ST.
The D-tree outperforms ST and HKS for deleting edges 
for all datasets except USA and PG. 
In general, the D-tree outperforms ST and HKS 
on graphs with small diameters. 
For PG and USA, ST outperforms HKS in insertions 
while HKS outperforms ST in deletions.

\begin{figure}[htb!] \centering
  \begin{subfigure}[b]{0.55\columnwidth}
    \includegraphics{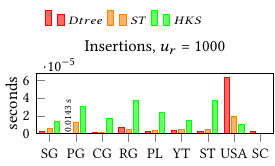}
  \end{subfigure}
  \begin{subfigure}[b]{0.42\columnwidth}
    \includegraphics{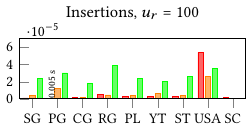}
  \end{subfigure}
  \begin{subfigure}[b]{0.55\columnwidth}
    \includegraphics{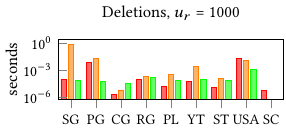}
  \end{subfigure}
  \begin{subfigure}[b]{0.42\columnwidth}
    \includegraphics{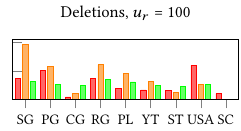}
  \end{subfigure}

  \caption{Average runtime of update operations 
  for the D-tree, ST and HKS. 
  HKS and ST run out of memory on SC.}
  \label{fig:dtree_st_hks}
\end{figure}

\subsection{Performance for Connectivity Queries}
\label{sec:query_performance}

We evaluate query performances during the life span of 
updates and we choose \texttt{test\_num} = 100 \emph{testing points} 
that are uniformly distributed among the updates. When running
queries, we suspend the updates. The updates are resumed at 
each testing point after the queries have been executed.
At each testing point we run 50 million 
connectivity queries for uniformly
distributed pairs of vertices and report the average runtime since
running connectivity queries for all pairs of vertices at 
each testing point is impractical.
Let $N_u$ be the total number of updates, $i.e.,$
the insertion and deletions of edges. 
This means there is one testing point per
$\sfrac{N_u}{test\_num}$ updates. We evaluate query 
performances for the D-tree, ST and HKS. We only  
show query performances with
$u_r$ = 1000 in Figure~\ref{fig:query_performance}
as experiments with $u_r$ = 100, 20 and 5 show very
similar results.  Our experiments show that ST is the 
best for query performances as its tree height 
is guaranteed to be at most $\log n$. ST outperforms all other 
data structures in answering queries on all datasets.
Query performance for the D-tree on PG is not shown 
because at each testing point it takes several hours for 
the D-tree to run queries, which is at least two 
orders of magnitude slower than 
ST. On PG and USA, which are worst cases for 
the D-tree, the query 
performances for the D-tree degenerate. 

\begin{figure}[htb!]\centering
  \includegraphics{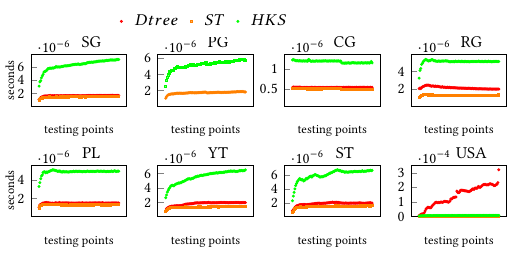}
  \caption{Average runtime of queries for the D-tree,
    HKS, and ST with $u_r$ = 1000.}
  \label{fig:query_performance}
\end{figure}

\subsection{Lessons Learned and Recommendations}
\label{sec:disc-recomm}

The first lesson is that lazy local trees with the lowest
  amortized costs are the slowest in terms of actual runtimes.  Lazy
  local trees address the worst case performance when an edge is deleted
  and a replacement edge must be searched, but the overhead for lazy
  local trees is substantial and the overall performance suffers.  In
  general, solutions that optimize amortized costs, either implemented
  with Euler-tour trees or height bounded trees, come at a hefty cost.
  For the workloads they also make the assumption 
  that all edges are deleted, which is not
  true in general.  Refined theoretical studies are needed to
  determine the amortized costs for a widely agreed-upon benchmark and
  workload for fully dynamic graphs.  Since no such benchmarks exist,
  the first critical step is to design and establish a standardized
  benchmark and workloads 
  that can be used to empirically evaluate solutions.

  The second lesson is that a good balance between space-efficient and
  deletion-efficient data structures has been elusive so far.  Our
  experiments show that the overhead of all deletion-efficient data
  structures is too high. We find that for workloads with a higher
  percentage of edge deletions (edge deletions are the most expensive
  operations) the runtime of the workloads on graphs with large
  diameters improves for unbalanced spanning trees. 
  The reason is that edge deletions
  break up long paths into shorter paths. This is effective for even
  a small number of deletions. 
  A lightweight modification of space-efficient data
  structures to curb the diameter seems the most promising direction
  for future research.  Our recommendation is to enhance unbalanced
  spanning trees 
   to avoid deterioration in the worst case. 

A third lesson is that constructing and maintaining a partitioned
structure, i.e., spanning trees where edges are placed on different
levels, is expensive.  Researchers who first implemented and evaluated
the original Euler-tour tree solutions also observed this and have
suggested the simplified Euler-tour tree solution without
levels.  While a partitioning of spanning trees with a push down of
edges to a higher level is theoretically attractive more work is
needed to possibly make this a practically attractive solution.

The fourth lesson are the costs for, respectively,
  computing and updating the attribute values of spanning trees (size,
  weight, bitmap, rank, etc).  The attributes are critical elements
  for pruning the search space and they can either be computed on
  demand or materialized.  Computing on demand or updating
  materialized attribute values in dynamic settings incur very
  different costs.  The design choices for supporting attributes must
  be based on theoretical and empirical results.  Implementing an
  algorithm without carefully justified design decisions provides
  partial (and possibly misleading) insights \cite{Hanauer22}.

\section{Related Work}

The existing data structures for answering connectivity queries on
fully dynamic graph have been described in detail in the previous
sections.  All data structures handle real-time deletions and
insertions of edges and hence answer connectivity queries in
real-time.  Recent work \cite{zhang2024incremental, song24} proposes
indices for connectivity queries over a batch of snapshots of the
graph and show that the D-tree also outperforms HK and
HDT in such settings. 
An alternative to spanning trees are labeling schemes
  for reachability queries \cite{bonifati2022querying, su_tkde}.
  Usually such approaches focus on directed graphs, and their
  performance depends on the label size. Update performances for
  labeling schemes can be quadratic while for approaches based on
  spanning trees, update performances are usually polylogarithmic or
  linear.  Labeling schemes do not perform well for fully dynamic
  graphs since the deletions of edges triggers an exhaustive search
  for updating the labels~\cite{dtree}.

Existing experimental studies are outdated, and were done on small graphs
with limited workloads, and do not cover D-tree, structural trees,
local trees and lazy local trees.  Alberts at
al. \cite{David_HKvariant} evaluated HK, HKS and a simplified
sparsification technique \cite{eppstein1992sparsification} using
random and non-random workloads over selected synthetic graphs.  The
number of updates is from 50 to 5000.  Their experiments show that HK
is faster than the sparsification for a larger number of updates.
Iyer at al. \cite{Iyer2002experimental} compared HK with HDT on random
graphs, semi-random graphs and cliques with at most 70,000
vertices. Random and non-random workloads over the synthetic graphs
are used.  The number of updates is from 10 to 1,000,000.  Their
experiments show that HDT and HK are comparable for a large number of
updates.

\section{Conclusion}
\label{sec:conclusion}

In this paper we comprehensively evaluate all the major data
structures for the dynamic connectivity problem by comparing their
memory footprints, update performances and query performances.  We
find that none of the data structures is robust in practice.  The 
D-tree and structural tree degenerate in worst case scenarios
 (line graph or other graphs with large diameters)
 while all balanced data structures are too 
 expensive to maintain.
Data structures based on ET-tree, especially HK, have high memory
footprints.  A robust data structure should have a bounded tree height
  (a cheap loose bound is better than an 
 expensive tight bound) 
 and the maintenance of such data
structure should not be expensive. In the future,
we plan to leverage our findings to advance 
existing data structures to a practical and robust 
solution and integrate
it into an existing DBMS such as K\`uzu \cite{kuzucidr}. 
As part of future research it is also interesting to 
investigate  
solutions for fully dynamic undirected graphs that are applicable
to directed graphs and to work on 
scalable labelings for the reachability 
problem that perform well for fully dynamic undirected graphs.

\section{Acknowledgements}
The authors would like to thank Jürgen Bernard for his valuable 
suggestions for graphically illustrating the relationships between 
the data structures. Qing Chen was partially supported by University of Zürich CanDoc grant 
Nr. FK-23-021. 


\bibliographystyle{ACM-Reference-Format}
\interlinepenalty=10000
\bibliography{paperbib}

\section{Appendix}

\subsection{Complementary materials for the local tree}
\label{sec:app_lt}

Algorithm \ref{alg:pair} shows the procedure of creating a local rank
root by pairing up two nodes with the same rank.
\begin{algorithm2e}[htb!]
  \small
  \caption{Pair($x$, $y$)}
  \label{alg:pair}
  \SetKwInOut{Input}{input}
  \SetKwInOut{Output}{output}
  \SetKwRepeat{Do}{do}{while}
  \Input{$x$ and $y$ are two nodes with same rank}
  \Output{$par$: a rank root that is the parent of $x$ and $y$}
  initialize a new node $par$\\
  $par.left$ = $x$; $par.right$ = $y$\\
  $x.parent$ = $par$; $y.parent$ = $par$\\
  $par.nl$ = $x.nl$ + $y.nl$\\
  $rank(par)$ = $rank(x)$ + 1\\
  $par.bitmap_t$ = $x.bitmap_t$ $|$ $y.bitmap_t$ (bitwise OR)\\
  $par.bitmap_{nt}$ = $x.bitmap_{nt}$ $|$ $y.bitmap_{nt}$ (bitwise OR)\\
  \Return $par$\\
\end{algorithm2e}

Next, we show our proofs for the local tree properties.

\begin{lemma}
\label{lemma:depth_localrankroot}
In a rank tree with the root node $r$, the depth of a node $x$ 
$depth(x)$ $=$ $rank(r)$ - $rank(x)$.
\end{lemma}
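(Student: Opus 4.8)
The plan is to prove the identity by induction on $depth(x)$, using the single structural fact that every internal node of a rank tree is produced by $\textsc{Pair}$ (Algorithm~\ref{alg:pair}) applied to two children \emph{of equal rank}, and that $\textsc{Pair}$ assigns the parent a rank exactly one larger than that common child rank. First I would record the following invariant for an arbitrary rank tree: if $v$ is a non-root node with parent $p$, then $rank(p) = rank(v) + 1$ (and, incidentally, $v$'s sibling has the same rank as $v$). This holds because $p$ was created by a call $\textsc{Pair}(v,w)$ (or $\textsc{Pair}(w,v)$) for the sibling $w$ of $v$, whose precondition forces $rank(v)=rank(w)$, and Algorithm~\ref{alg:pair} then sets $rank(p)=rank(v)+1$; since neither a node's rank nor the parent/child links inside an already-assembled subtree are ever overwritten by later $\textsc{Pair}$ operations, this relation still holds in the final tree rooted at $r$.

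Given the invariant, I would carry out the induction on $d=depth(x)$. For the base case $d=0$ we have $x=r$, and indeed $rank(r)-rank(x)=0=depth(x)$. For the inductive step, suppose $x\neq r$ and let $p$ be its parent, so $depth(p)=d-1$ and, by the induction hypothesis, $depth(p)=rank(r)-rank(p)$. Applying the invariant, $rank(p)=rank(x)+1$, hence
\[
depth(x)=depth(p)+1=\bigl(rank(r)-rank(p)\bigr)+1=rank(r)-rank(x),
\]
which closes the induction and proves the lemma.

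The only point that needs care — and the closest thing to an obstacle — is justifying that a node's rank and its intra-subtree edges are immutable once set, so that the local relation established at $\textsc{Pair}$-time survives into the final rank tree; this is immediate from inspecting Algorithm~\ref{alg:pair}, which only writes the \emph{new} node's fields and never rewrites an existing node's rank or its existing descendants. As a consistency check one may also verify the relation against the closed form $rank(v)=\lfloor\log_2(v.nl)\rfloor$: if $rank(x)=rank(y)=k$ then $2^{k+1}\le x.nl+y.nl<2^{k+2}$, so $\lfloor\log_2(par.nl)\rfloor=k+1$, matching the assignment made by $\textsc{Pair}$.
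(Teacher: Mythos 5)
Your proof is correct and follows essentially the same route as the paper's: both rest on the single fact from Algorithm~\ref{alg:pair} that a parent's rank exceeds its children's common rank by exactly one, and then accumulate this unit increment along the path from $x$ to $r$ --- the paper via an explicit telescoping sum over ancestors, you via induction on depth, which is the same argument in different clothing. Your added remarks on rank immutability and the consistency check against $rank(v)=\lfloor\log_2(v.nl)\rfloor$ are correct but not needed beyond what the paper already assumes.
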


\begin{proof}
$W.l.o.g$, consider the non-root node $x$ and its 
parent $par$. As shown by line 5 in 
Algorithm~\ref{alg:pair}, $rank(par)$ $-$ $rank(x)$ $=$ 1 which
is equal to $depth(x)$ $-$ $depth(par)$. We use $anc(x, dx)$ to
denote the ancestor of $x$ with the distance $dx$. 
Let $d_x$ be the depth of $x$ in the local rank tree, hence
{\tiny 
\begin{align*}
  depth(x) - depth(anc(x, 1)) &= rank(anc(x, 1)) - rank(x)\\
  depth(anc(x, 1)) - depth(anc(x, 2)) &= rank(anc(x, 2)) - rank(anc(x, 1))\\
  & \dots\\
  depth(anc(x, d_x - 2)) - depth(anc(x, d_x - 1)) &= rank(anc(x, d_x - 1)) - rank(anc(x, d_x - 2))\\
  depth(anc(x, d_x - 1)) - depth(anc(x, d_x)) &= rank(anc(x, d_x)) - rank(anc(x, d_x - 1))\\
\end{align*}
}

Summing up, 
\[ depth(x) - depth(anc(x, d_x)) = rank(anc(x, d_x)) - rank(x)\]
Since $anc(x, d_x)$ = $r$ and $depth(anc(x, d_x))$ $=$ $depth(r)$ $=$ $0$, 
the depth of $x$ in the local rank tree 
\[depth(x) = rank(r) - rank(x)\]
\end{proof}

\begin{lemma}[\cite{thorup2000near}]
\label{lemma:ranktree_height}
The height of a rank tree is at most $\log n$.
\end{lemma}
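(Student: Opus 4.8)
The plan is to derive the bound directly from Lemma~\ref{lemma:depth_localrankroot}. That lemma already establishes that in a rank tree with root $r$, every node $x$ satisfies $depth(x) = rank(r) - rank(x)$. Since the height of the rank tree equals $\max_x depth(x)$ taken over all its nodes (equivalently, over its leaves), it suffices to bound $rank(r)$ from above and $rank(x)$ from below.

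First I would observe that $rank(x) \ge 0$ for every node $x$: by definition $rank(x) = \lfloor \log_2(x.nl) \rfloor$, and $x.nl \ge 1$ because the subtree rooted at $x$ contains at least one leaf, so $\lfloor \log_2(x.nl) \rfloor \ge 0$. Substituting this into the identity of Lemma~\ref{lemma:depth_localrankroot} gives $depth(x) = rank(r) - rank(x) \le rank(r)$ for every $x$, hence the height of the rank tree is at most $rank(r)$.

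Next I would bound $rank(r)$ itself. The root $r$ has $r.nl$ equal to the number of leaf nodes contained in the rank tree, which cannot exceed $n$, the number of (leaf) nodes of the local tree in which the rank tree sits. Therefore $rank(r) = \lfloor \log_2(r.nl) \rfloor \le \lfloor \log_2 n \rfloor \le \log n$. Combining the two bounds yields $\text{height} \le rank(r) \le \log n$, which is the claim.

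There is no real obstacle here: the statement is essentially a one-line consequence of Lemma~\ref{lemma:depth_localrankroot} once the pairing construction of Algorithm~\ref{alg:pair} has been set up. The only points that require care are pinning down the meaning of $n$ (the node count of the local tree hosting the rank tree, so that $r.nl \le n$) and noting that ranks are non-negative because every node of a rank tree has at least one descendant leaf; an alternative, equally short route is a direct induction on the number of pair operations used to build the rank tree, but it is strictly more work than invoking Lemma~\ref{lemma:depth_localrankroot}.
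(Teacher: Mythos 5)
Your proof is correct and follows essentially the same route as the paper's: both invoke Lemma~\ref{lemma:depth_localrankroot} and then bound the root's rank from above via $r.nl \le n$ and the leaf's rank from below via $x.nl \ge 1$, giving $depth \le \log n - \log 1 = \log n$. Your write-up is slightly more explicit about where the floors and the non-negativity of ranks enter, but the argument is the same.
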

\begin{proof}
$W.l.o.g$, let $y$ be one leaf node in a rank tree with the root node $r$. 
It follows that $r.nl$ $\leq$ $n$ and
$y.nl$ $\geq$ 1. With Lemma~\ref{lemma:depth_localrankroot}, 
\begin{align*}
  depth(y) & \leq \log n - \log 1 \\
           & \leq \log n
\end{align*}

\end{proof}

\begin{lemma}[\cite{thorup2000near}]
\label{lemma:localrankroot}
There are at most $\log n$ rank roots with unique
ranks for a local tree with $n$ tree nodes.
\end{lemma}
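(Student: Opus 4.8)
The plan is to exploit the defining invariant of the pairing process used to build the local tree: whenever two nodes currently have the same rank they are combined into a new rank tree, so the construction can only halt once the surviving rank roots have \emph{pairwise distinct} ranks. It therefore suffices to (i) establish that the rank roots of the finished local tree have distinct ranks, (ii) bound the set of possible rank values, and then simply count.

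For step (i) I would run a termination/potential argument on the pairing procedure (Algorithm~\ref{alg:pair}). Each pair operation takes two nodes of equal rank $r$ and creates a parent $par$ with $rank(par) = r+1$ and $par.nl = x.nl + y.nl$; the two children cease to be rank roots and $par$ becomes one. Since every node of the local tree is paired at most once and the node set is finite, the procedure terminates, and by its stopping condition no two of the remaining rank roots can share a rank. The subtlety to state carefully here is that ``rank'' is not static — it is updated by every pair operation — so the invariant must be phrased about the configuration reached \emph{after} the pairing phase ends, and one must check the phase is well defined (pairing equal-rank nodes cannot loop forever, and every admissible order of pairings terminates with distinct ranks).

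For step (ii), recall $rank(x) = \lfloor \log_2 (x.nl) \rfloor$, and that the subtree rooted at any node $x$ of the local tree contains at most all $n$ leaf (``tree'') nodes, so $1 \le x.nl \le n$ and hence $0 \le rank(x) \le \lfloor \log_2 n \rfloor$. The ranks of the rank roots are then distinct integers drawn from $\{0,1,\dots,\lfloor \log_2 n \rfloor\}$, a set of size $\lfloor \log_2 n \rfloor + 1$, which yields the stated bound (matching the paper's loose use of $\log n$, as in Lemma~\ref{lemma:ranktree_height}). If a marginally sharper count is wanted, one can instead note that the subtrees rooted at distinct rank roots are leaf-disjoint and together cover all $n$ leaves, so $\sum_r 2^{rank(r)} \le \sum_r r.nl \le n$; distinct non-negative ranks $r_1 < \dots < r_k$ force $\sum_i 2^{r_i} \ge 2^k - 1$, hence $2^k - 1 \le n$ and $k \le \log_2(n+1)$.

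The main obstacle is really step (i): making precise the claim that the pairing phase is well defined and that after it no two rank roots share a rank, given that the rank of a node changes as it is merged upward. Once that structural invariant is in hand, the remainder — bounding each rank by $\lfloor \log_2 n\rfloor$ via $x.nl \le n$ and counting distinct values — is routine, the only cosmetic issue being the off-by-one absorbed into writing $\lfloor \log_2 n\rfloor + 1$ as $\log n$, which is immaterial for the height bounds this lemma feeds into.
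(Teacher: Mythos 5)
Your proposal is correct, and your ``sharper count'' fallback is in fact the paper's own argument: the paper lists the rank roots $S$, uses $S[i].nl \ge 2^{rank(S[i])}$ together with $n = \sum_i S[i].nl$, and derives a contradiction if there are more than $\log n$ distinct ranks --- exactly your observation that distinct non-negative ranks force $\sum_i 2^{r_i} \ge 2^k - 1 \le n$. Your primary route (each rank lies in $\{0,\dots,\lfloor\log_2 n\rfloor\}$ because $1 \le x.nl \le n$, and distinct integers in that range number at most $\lfloor\log_2 n\rfloor+1$) is a simpler pigeonhole argument that the paper does not use; it buys brevity at the cost of an additive $+1$ over the stated bound, which you correctly flag as immaterial for the downstream $O(\log n)$ height bounds (and, for what it is worth, the paper's own proof is equally loose at this constant level --- for $n=2^m-1$ one can realize ranks $0,\dots,m-1$, so the literal bound $\log n$ is off by the same one). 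Your step~(i), establishing that surviving rank roots have pairwise distinct ranks, is careful but not something the paper proves: distinctness is built into the lemma's phrasing (``rank roots with unique ranks'') and into the description of the pairing procedure in Section~3.5.2, so the paper simply takes it as given. No gap; the only substantive difference is that you prove a touch more (termination and distinctness of the pairing phase) and, in your primary variant, a touch less (the $+1$).
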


\begin{proof}
Let $S$ be the list of rank roots with unique ranks 
and $|S|$ $=$ $X$.  By the definition
of rank, $rank(S[i])$ = $\lfloor$ $\log_2 S[i].nl$ $\rfloor$. 
It follows that $S[i].nl$ $\geq$  $2^{rank(S[i])}$.
The fact is  $n$ $=$ $\sum\limits_{i=0}^{X - 1} S[i].nl$.
We prove by contradiction. 
Assume the opposite that $X$ $>$ $\log n$.
\begin{itemize}
  \item If $X$ $=$ $\log n$ $+$ 1, then $rank(S[i]) = i$ for $i$ $<$ $X$.
  \begin{align*}
  \sum\limits_{i=0}^{X - 1} S[i].nl &\geq \sum\limits_{i=0}^{\log n} 2^{rank(S[i])} \\
  &\geq \sum\limits_{i=0}^{\log n} 2^i\\
  &\geq 2 * n > n
  \end{align*}
  \item If $X$ $>$ $\log n$ $+$ 1, $rank(S[X - 1])$ $>$ $\log n$ as
  every local rank root has a unique rank. 
  By definition, $S[X - 1].nx$
   $\geq 2^{rank(S[X - 1])}$ $>$ $n$. 
\end{itemize}
To conclude, $X$ $\leq$ $\log n$. Hence $|S|$ $\leq$ $\log n$.
\end{proof}

Algorithm \ref{alg:constructlocaltree} shows the construction of
a local tree over a sorted list of local rank roots with unique ranks.
\begin{algorithm2e}[htb!]
  \small
  \caption{Construct($S$, $a$)}
  \label{alg:constructlocaltree}
  \SetKwInOut{Input}{input}
  \SetKwInOut{Output}{output}
  \SetKwRepeat{Do}{do}{while}
  \Input{$S$: a sorted list of rank roots with unique ranks}
  \Output{a: the root node of a local tree}
  $X$ = $|S|$\\
  Initialize a node $a$ \\
  \If{$X$ = 1}{
    \Return $a$ with $S[0]$ as the left child\\
  } \Else {
    $lc$ = $S[0]$, $rc$ = $S[1]$\\
    \If{$X$ == 2}{
      $cur$ $=$ $a$ \\
    }\Else{
      Initialize a new node $cur$\\ 
    }
    make $lc$ and $rc$ as the left and right child of $cur$, respectively\\
    $i$ = 2\\
    \While{$i$ $<$ $X$} {
      \If{$i$ == $X - 1$} {
        $ne$ = $a$ (the root node)
      } \Else {
        Initialize a new node $ne$ (a connecting node)
      }
      $ne.left$ = $cur$\\
      $ne.right$ = $S[i]$\\
      $i$ = $i$ $+$ 1\\
      $cur$ = $ne$\\
    }
    \Return $cur$\\
  } 
\end{algorithm2e}

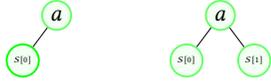
\begin{figure}[htb!]\centering
  \scalebox{0.6}{
    \begin{tikzpicture}[roundnode/.style={circle, draw=green!60, fill=green!5, very thick, minimum size=6mm}]
      \node[roundnode, draw=green] (l0)  at (-1.5, 0.0)  {{\tiny$S[0]$}};
      \node[roundnode] (p1)  at (-0.75, 1.0)  {$a$};
      \draw[-] (p1) to (l0);
    \end{tikzpicture}
    \hspace{2cm}
    \begin{tikzpicture}[roundnode/.style={circle, draw=green!60, fill=green!5, very thick, minimum size=6mm}]
      \node[roundnode] (l0)  at (-1.5, 0.0)  {{\tiny$S[0]$}};
      \node[roundnode] (l1)  at (0.0, 0.0)  {{\tiny$S[1]$}};
      \node[roundnode] (p1)  at (-0.75, 1.0)  {$a$};
      \draw[-] (p1) to (l0);
      \draw[-] (p1) to (l1);
    \end{tikzpicture}
  }

  \caption{A local tree constructed by Algorithm~\ref{alg:constructlocaltree} when $X$ $\leq$ $2$.}
\end{figure} 
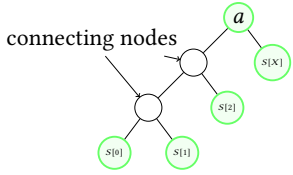
\begin{figure}[htb!]\centering
  \scalebox{0.6}{
    \begin{tikzpicture}[roundnode/.style={circle, draw=green!60, fill=green!5, very thick, minimum size=6mm}]
      \node[roundnode] (l0)  at (-1.5, 0.0)  {{\tiny$S[0]$}};
      \node[roundnode] (l1)  at (0.0, 0.0)  {{\tiny$S[1]$}};
      \node[circle, draw, minimum size=6mm] (p1)  at (-0.75, 1.0)  {};
      \draw[-] (p1) to (l0);
      \draw[-] (p1) to (l1);
      
      \node[roundnode] (l2)  at (1.0, 1.0)  {{\tiny$S[2]$}};
      \node[circle, draw, minimum size=6mm] (p2)  at (0.25, 2.0)  {};
      \draw[-] (p2) to (p1);
      \draw[-] (p2) to (l2);
      
      \node[roundnode] (r)  at (1.25, 3.0)  {$a$};
      \draw[-] (r) to (p2);
      \node[roundnode] (lx)  at (2.0, 2.0)  {{\tiny$S[X]$}};
      \draw[-] (r) to (lx);
      \node(l) at (-2, 2.5) {connecting nodes};
      \draw [->] (l) edge (p1);
      \draw [->] (l) edge (p2);
    \end{tikzpicture}
  }
  \caption{A local tree constructed by Algorithm~\ref{alg:constructlocaltree} for $X$ > 2.}
\end{figure}

\begin{lemma}
  \label{lemma:dist_lrt}
  In the local tree with the root node $r$, the depth for a local rank root
  $rt$ $depth(rt)$ $\leq$ 1 $+$ $rank(r)$ $-$ $rank(rt)$.
\end{lemma}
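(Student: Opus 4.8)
The plan is to trace through the assembly of the local tree in Algorithm~\ref{alg:constructlocaltree} and read off the depth of every rank root, then convert that depth into the claimed rank inequality. Write $S = S[0], S[1], \dots, S[X-1]$ for the sorted list of rank roots with unique (hence strictly increasing) ranks on which the local tree is built, so that $r$ is the node returned by \texttt{Construct}$(S, a)$. A short induction on the iterations of the \texttt{while} loop shows that the internal nodes of the local tree form one path down the left spine (the root $a$ together with the $X-2$ connecting nodes), that $S[X-1]$ hangs directly off $a$, and that each $S[k]$ is attached one level lower as $k$ decreases, with $S[0]$ and $S[1]$ sharing the deepest connecting node. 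Concretely, $depth(S[k]) = X-k$ for $1 \le k \le X-1$ and $depth(S[0]) = X-1$, so in every case $depth(rt) \le X-k$ when $rt = S[k]$. The degenerate cases $X=1$ and $X=2$, where the local tree has no connecting nodes and every rank root has depth $1$, are checked directly.

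Next I would bound $X-k$ in terms of ranks. Since the ranks along $S$ are distinct integers in increasing order, $rank(S[j]) \ge rank(S[k]) + (j-k)$ for $j \ge k$; taking $j = X-1$ gives $rank(S[X-1]) \ge rank(rt) + (X-1-k)$, i.e. $X-k \le 1 + rank(S[X-1]) - rank(rt)$. It then remains to see that $rank(r) \ge rank(S[X-1])$: the subtree rooted at $r$ contains every rank tree, so $r.nl = \sum_{j=0}^{X-1} S[j].nl \ge S[X-1].nl \ge 2^{rank(S[X-1])}$, where the last step is the elementary fact $x.nl \ge 2^{rank(x)}$ already used in the proof of Lemma~\ref{lemma:localrankroot}; hence $rank(r) = \lfloor \log_2(r.nl) \rfloor \ge rank(S[X-1])$. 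Chaining the three bounds gives $depth(rt) \le X-k \le 1 + rank(S[X-1]) - rank(rt) \le 1 + rank(r) - rank(rt)$, which is the statement.

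I expect the only delicate point to be nailing down the exact depth formula produced by Algorithm~\ref{alg:constructlocaltree}, in particular the off-by-one between $S[0]$ and $S[1]$ and the special handling of $X \le 2$; once $depth(rt) \le X-k$ is in hand the rank bookkeeping is routine. A cleaner alternative avoids counting depths altogether: argue by induction on $X$ (base cases $X \le 2$) that \texttt{Construct} returns a tree satisfying the inequality for every rank root, peeling off $S[X-1]$ — which sits directly under the new root $r$, and for which $rank(r) \ge rank(S[X-1])$ — and applying the induction hypothesis to the sub-local-tree on $S[0], \dots, S[X-2]$, whose root $r'$ sits at depth $1$ and satisfies $rank(r') \le rank(r) - 1$ because $r.nl \ge r'.nl + S[X-1].nl \ge 2 \cdot 2^{rank(r')}$. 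Either route works; I would present the depth-counting one, as it parallels Lemma~\ref{lemma:depth_localrankroot}.
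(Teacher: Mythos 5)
Your proposal is correct and follows essentially the same route as the paper's proof: both exploit that the rank roots lie on the left spine built by Algorithm~\ref{alg:constructlocaltree}, so consecutive rank roots differ by exactly one in depth and by at least one in rank, which telescopes to the claimed bound with the extra $+1$ coming from the top edge to $r$ (the paper writes this as a step-by-step telescoping sum, you as an explicit depth formula $depth(S[k]) = X-k$, but the content is identical). Your explicit justification that $rank(r) \geq rank(S[X-1])$ via $r.nl \geq S[X-1].nl \geq 2^{rank(S[X-1])}$ is a small point the paper leaves implicit, and is a welcome addition.
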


\begin{proof}
Let $S$ be the list of rank roots with unique ranks 
and $|S|$ $=$ $X$. \\
\noindent \textbf{Case 1: $X$ $\leq$ 2} \\
Depths for $S[0]$ and $S[1]$:
\begin{itemize}
  \item $depth(S[0])$ $=$ $1$ $\leq$ $1$ $+$ $rank(r)$ $-$ $rank(S[0])$ 
  \item $depth(S[1])$ $=$ $1$ $\leq$ $1$ $+$ $rank(r)$ $-$ $rank(S[1])$
\end{itemize}
Hence, $depth(S[i])$ $\leq$ $1$ $+$ $rank(r)$ $-$ $rank(S[i])$.

\noindent \textbf{Case 2: $X$ $>$ 2}\\
A local tree is constructed such that $depth(S[i])$ 
$-$ $depth(S[i + 1])$ $=$ 1 for $i$ $>$ 0. Since rank roots in $S$ have 
unique ranks, $rank(S[i + 1])$ $>$ $rank(S[i])$, 
more precisely $rank(S[i + 1])$ 
$-$ $rank(S[i])$ $\geq$ 1. It follows that $depth(S[i])$ - 
$depth(S[i + 1])$ $\leq$ $rank(S[i + 1])$ - $rank(S[i])$.

In general, 
\begin{align*}
  depth(S[i]) - depth(S[i + 1])&\leq rank(S[i + 1]) - rank(S[i])\\
  depth(S[i + 1]) - depth(S[i + 2]) &\leq rank(S[i + 2]) - rank(S[i + 1])\\
  & \dots\\
  depth(S[X-1]) - depth(S[X])&\leq rank(S[X]) - rank(S[X-1])\\
  depth(S[X]) - depth(r) &\leq rank(r) - rank(S[X]) + 1\\
\end{align*}
To sum up, $depth(S[i]) - depth(r) \leq rank(r) - rank(S[i]) + 1$. 
Since $depth(r)$ $=$ 0, $depth(S[i]) \leq rank(r) - rank((S[i]) + 1$.

To conclude, $depth(S[i]) \leq rank(r) - rank(S[i]) + 1$.
\end{proof}

\begin{theorem}[~\cite{thorup2000near}]
\label{thm:localtree_property}
In a local tree rooted with the root node $r$ 
the depth of a node $x$ is $O(1 + \log(r.nl/x.nl))$.
\end{theorem}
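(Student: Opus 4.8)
The plan is to combine the two structural facts already established: any node $x$ of the local tree lies in some rank tree whose root $rt$ is a rank root with a unique rank (Lemma~\ref{lemma:localrankroot} guarantees there are at most $\log n$ such rank roots), and within that rank tree the depth of $x$ is controlled by Lemma~\ref{lemma:depth_localrankroot}, while the depth of $rt$ itself inside the local tree is controlled by Lemma~\ref{lemma:dist_lrt}. So the depth of $x$ in the whole local tree decomposes as
\[
  depth(x) = depth(rt) + depth_{rank}(x),
\]
where $depth_{rank}(x)$ is the depth of $x$ within its rank tree. I would bound each summand separately and add.

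\textbf{Step 1 (depth inside the rank tree).} By Lemma~\ref{lemma:depth_localrankroot}, $depth_{rank}(x) = rank(rt) - rank(x)$. Since $rank(y) = \lfloor \log_2 (y.nl)\rfloor$ for any node $y$, this is at most $\log_2(rt.nl) - \log_2(x.nl) + 1 = \log_2(rt.nl / x.nl) + 1$ (using $\lfloor\log_2 a\rfloor \le \log_2 a$ and $\lfloor \log_2 b\rfloor > \log_2 b - 1$).

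\textbf{Step 2 (depth of the rank root in the local tree).} By Lemma~\ref{lemma:dist_lrt}, $depth(rt) \le 1 + rank(r) - rank(rt)$, which by the same floor estimates is at most $1 + \log_2(r.nl) - \log_2(rt.nl) + 1 = \log_2(r.nl/rt.nl) + 2$.

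\textbf{Step 3 (combine).} Adding the two bounds, the $rank(rt)$ terms cancel and the $rt.nl$ terms cancel, giving
\[
  depth(x) \le \log_2(r.nl/x.nl) + O(1) = O\!\left(1 + \log(r.nl/x.nl)\right),
\]
which is the claim. One should also note the edge case where $x$ is itself a rank root (then its rank tree is the single node $x$, $depth_{rank}(x)=0$, and only Step 2 is needed) so the decomposition is still valid; and where $X = |S| \le 2$, already handled inside Lemma~\ref{lemma:dist_lrt}. \textbf{The main obstacle} is not analytic but bookkeeping: one must be careful that the additive constants introduced by replacing $\lfloor\log_2\rfloor$ with $\log_2$ do not accumulate with the level index $i$ or depend on $n$ — they are genuinely $O(1)$ because only two such replacements occur — and that the telescoping in Lemmas~\ref{lemma:depth_localrankroot} and~\ref{lemma:dist_lrt} correctly makes the intermediate $rt.nl$ and $rank(rt)$ quantities drop out, leaving a clean ratio $r.nl/x.nl$.
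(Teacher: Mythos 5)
Your proposal is correct and follows essentially the same route as the paper: decompose $depth(x)$ into the depth of $x$ within its rank tree (Lemma~\ref{lemma:depth_localrankroot}) plus the depth of that rank tree's root in the local tree (Lemma~\ref{lemma:dist_lrt}), observe that the $rank(rt)$ (equivalently $rt.nl$) terms telescope, and absorb the floor-induced additive constants into the $O(1)$. The only cosmetic difference is that you convert ranks to logarithms in each step before summing, whereas the paper first telescopes the ranks to get $depth(x) \leq rank(r) - rank(x) + 1$ and converts to logarithms once at the end.
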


\begin{proof}~\\
\begin{itemize}
  \item If node $x$ is a rank root, Lemma~\ref{lemma:depth_localrankroot} 
  shows that $depth(x)$ $\leq$ $rank(r)$ - $rank(x)$ $+$ 1.
  \item Otherwise, let $lt$ be the root node of the rank tree that 
  contains $x$. The $depth(x)$ $=$ $dist(x, lt)$ $+$ $dist(lt, r)$, where
    $dist(x, lt)$ is the $x$'s depth in the local rank tree rooted at $lt$
    and $dist(lt, r)$ is the depth of $lt$ in the local tree.
Lemma~\ref{lemma:dist_lrt} shows that $dist(lt, r)$ $\leq$ $rank(r)$ - $lt$.rank + 1
and Lemma~\ref{lemma:depth_localrankroot} shows that $dist(x, lt)$ $=$ $rank(lt)$ - $rank(x)$.
Hence,
\begin{align*}
depth(x) &= dist(x, lt) + dist(lt, r)\\
          &= rank(lt) - rank(x) + dist(lt, r)\\
          &\leq rank(lt) - rank(x) + rank(r) - rank(lt) + 1\\
          &\leq rank(r) - rank(x) + 1 \\
\end{align*}
\end{itemize}
In general,
\begin{align*}
depth(x) &\leq rank(r) - rank(x) + 1 \\
          &\leq \lfloor \log_2 r.nl \rfloor - \lfloor \log_2 x.nl \rfloor + 1\\
          &= O(\log r.nl - \log_2 x.nl + 1)\\
          &= O(\log  \frac{r.nl}{x.nl} + 1)\\
\end{align*}
\end{proof} 

\begin{lemma}
\label{lemma:height_localtree}
The height of a local tree is $O(\log n)$.
\end{lemma}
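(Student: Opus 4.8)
The plan is to obtain this as an immediate corollary of Theorem~\ref{thm:localtree_property}, since the height of a local tree is by definition the maximum depth of any of its nodes. First I would recall that Theorem~\ref{thm:localtree_property} already bounds $depth(x) = O(1 + \log(r.nl / x.nl))$ for an arbitrary node $x$, where $r$ is the root of the local tree. Then I would instantiate the two extreme quantities: the root of a local tree with $n$ tree nodes satisfies $r.nl = n$, and every node $x$ satisfies $x.nl \ge 1$, so $\log(r.nl / x.nl) \le \log n$. Hence every node has $depth(x) = O(1 + \log n) = O(\log n)$, and taking the maximum over all nodes shows the height is $O(\log n)$.

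Alternatively, and perhaps more transparently, I would give a decomposition argument that reuses the earlier lemmas directly: any root-to-leaf path first descends the chain of connecting nodes that joins the rank roots and then descends into a single rank tree. By Lemma~\ref{lemma:dist_lrt} the deepest rank root $rt$ sits at depth at most $1 + rank(r) - rank(rt) \le 1 + \log n$ (using $rank(r) = \lfloor \log_2 n \rfloor$ and $rank(rt) \ge 0$), while by Lemma~\ref{lemma:ranktree_height} the portion of the path lying inside a rank tree contributes at most a further $\log n$. Summing the two parts gives $depth(x) \le 2\log n + 1 = O(\log n)$ for every leaf, and hence for every node on such a path.

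The only point that needs care is to make sure the bound is applied to \emph{every} kind of node of the local tree — the original leaf nodes, the internal nodes of rank trees, the rank roots, and the connecting nodes — and not just to leaves or to rank roots. For rank roots Lemma~\ref{lemma:dist_lrt} is exactly what is needed; for internal nodes of rank trees one combines Lemma~\ref{lemma:depth_localrankroot} (depth inside the rank tree is $rank(lt) - rank(x)$) with Lemma~\ref{lemma:dist_lrt} (depth of the containing rank root $lt$), exactly as in the proof of Theorem~\ref{thm:localtree_property}; and each connecting node lies on the path to some rank root, so its depth is dominated by that of a rank root and is therefore also $O(\log n)$. Since there is no genuine obstacle here, the write-up will consist mainly of this small case check together with substituting $r.nl = n$ and $x.nl \ge 1$; if anything, the mild subtlety is confirming that Theorem~\ref{thm:localtree_property} is stated for arbitrary nodes of the local tree rather than only for leaves, which justifies the one-line derivation above.
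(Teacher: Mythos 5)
Your first argument is exactly the paper's proof: the paper also takes the deepest node (a leaf, so $x.nl = 1$), applies Theorem~\ref{thm:localtree_property}, and concludes $depth(x) = O(\log r.nl) = O(\log n)$. The alternative decomposition and the case check for connecting nodes are fine but not needed beyond what the theorem already covers, so the proposal is correct and essentially identical in approach.
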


\begin{proof}
Let $x$ be the node with maximum depth in the local tree 
rooted at $r$. The height of the local tree is equal to 
the $depth(x)$. Theorem~\ref{thm:localtree_property} shows 
that $depth(x)$ = $O(\log \frac{r.nl}{x.nl}) + 1$.
Node $x$ is a leaf node to have the maximum depth, 
hence $n(x)$ $=$ $1$. It follows that $depth(x)$ $=$ 
$O(\log r.nl) + 1$ $=$ $O(\log r.nl)$ $=$ $O(\log n)$.
\end{proof}

\begin{figure}[htb!]
  \scalebox{0.7}{
    \begin{subfigure}[b]{0.45\columnwidth}\centering
    \begin{tikzpicture}[roundnode/.style={circle, draw=green!60, fill=green!5, very thick, minimum size=6mm}]
      \node[roundnode] (l0)  at (-1.5, 0.0)  {};
      \node[roundnode] (l1)  at (0.0, 0.0)  {$x$};
      \node[circle, draw, minimum size=6mm] (p1)  at (-0.75, 1.0)  {};
      \draw[-] (p1) to (l0);
      \draw[-] (p1) to (l1);
      
      \node[roundnode] (l2)  at (1.0, 1.0)  {};
      \node[circle, draw, minimum size=6mm] (p2)  at (0.25, 2.0)  {};
      \draw[-] (p2) to (p1);
      \draw[-] (p2) to (l2);
      
      \node[roundnode] (r)  at (1.25, 3.0)  {$a$};
      \draw[-] (r) to (p2);
      \node[roundnode] (lx)  at (2.0, 2.0)  {};
      \draw[-] (r) to (lx);
    \end{tikzpicture}
    \caption{Node $x$ is a local rank root}
    \end{subfigure}
  }
  \hfill
  \scalebox{0.7}{
  \begin{subfigure}[b]{0.5\columnwidth}\centering
    \begin{tikzpicture}[roundnode/.style={circle, draw=green!60, fill=green!5, very thick, minimum size=6mm}]
      \node[roundnode] (l0)  at (-1.5, 0.0)  {};
      \node[roundnode] (l1)  at (0.0, 0.0)  {};
      \node[circle, draw, minimum size=6mm] (p1)  at (-0.75, 1.0)  {};
      \draw[-] (p1) to (l0);
      \draw[-] (p1) to (l1);
      
      \node[roundnode] (l2)  at (1.2, 1.2)  {$lt$};
      \node[circle, draw, minimum size=6mm] (p2)  at (0.25, 2.0)  {};
      \draw[-] (p2) to (p1);
      \draw[-] (p2) to (l2);

      \node[roundnode] (x)  at (1.75, 0.25)  {$x$};
      \node[roundnode] (y)  at (0.75, 0.25)  {};
      \draw[-] (x) to (l2);
      \draw[-] (y) to (l2);
      
      \node[roundnode] (r)  at (1.25, 3.0)  {$a$};
      \draw[-] (r) to (p2);
      \node[roundnode] (lx)  at (2.0, 2.0)  {};
      \draw[-] (r) to (lx);
    \end{tikzpicture}
    \caption{Node $x$ is not a local rank root}
  \end{subfigure}
  }
\end{figure}

\end{document}